\newtheorem{theorem}{Theorem}
\newtheorem{lemma}{Lemma}
\newcommand\TODO[1]{\textcolor{red}{TODO: {#1}}}
\begin{document} 
\title{Differentially-Private ``Draw and Discard'' Machine Learning}

\author{\IEEEauthorblockN{Vasyl Pihur\IEEEauthorrefmark{1},
Aleksandra Korolova\IEEEauthorrefmark{2},
Frederick Liu\IEEEauthorrefmark{3}, 
Subhash Sankuratripati\IEEEauthorrefmark{3}, \\
Moti Yung\IEEEauthorrefmark{4}, 
Dachuan Huang\IEEEauthorrefmark{3}, 
Ruogu Zeng\IEEEauthorrefmark{3}} 
\IEEEauthorblockA{\IEEEauthorrefmark{1}Corresponding author. Snap Inc.
Email: vpihur@snapchat.com}
\IEEEauthorblockA{\IEEEauthorrefmark{2}USC.
Email: korolova@usc.edu}
\IEEEauthorblockA{\IEEEauthorrefmark{3}Snap Inc., Emails: \{fliu, subhash, dachuan.huang, ruogu.zeng\}@snap.com}
\IEEEauthorblockA{\IEEEauthorrefmark{4}Snap Inc., Email: moti@cs.columbia.edu}}

\IEEEoverridecommandlockouts
\makeatletter\def\@IEEEpubidpullup{9\baselineskip}\makeatother
\IEEEpubid{\parbox{\columnwidth}{Anonymous submission to IEEE Symposium on Security and Privacy 2019.
}
\hspace{\columnsep}\makebox[\columnwidth]{}}

\maketitle
\begin{abstract}
In this work, we propose a novel framework for privacy-preserving client-distributed machine learning. It is motivated by the desire to achieve differential privacy guarantees in the \emph{local} model of privacy in a way that satisfies all systems constraints using \emph{asynchronous} client-server communication and provides attractive model learning properties. We call it ``Draw and Discard'' because it relies on random sampling of models for load distribution (scalability), which also provides additional server-side privacy protections and improved model quality through averaging. We present the mechanics of client and server components of ``Draw and Discard" and demonstrate how the framework can be applied to learning Generalized Linear models. We then analyze the privacy guarantees provided by our approach against several types of adversaries and showcase experimental results that provide evidence for the framework's viability in practical deployments. We believe our framework is the first deployed distributed machine learning approach that operates in the local privacy model.
\end{abstract}


\section{Introduction}
In this work, we propose a Machine Learning (ML) framework, unique in many ways, that touches on several different aspects of \emph{practical} deployment of locally differentially private ML, all of which are equally important. These aspects include feasibility, scalability, efficiency, spam protection, ease of implementation, and, of course, privacy. Ideally, they all must interplay together in a manner that enhances each other. From that perspective, this work is as much a systems one, as it is both privacy and machine learning focused.

Machine learning made our mobile devices ``smart''. Applications span a wide range of seemingly indispensable features, such as personalized app recommendations, next-word suggestions, feed ranking, face and fingerprint recognition and many others. The downside is that they often come at the expense of privacy of the users who share their personal data with parties providing these services. However, as we demonstrate in this work, this does not necessarily need to be the case.

Historically, ML was developed with a server-centric view of first collecting data in a central place and then training models based on them. Logistic regression~\cite{cox} and neural networks~\cite{neural}, introduced
over half a century ago, follow a now-familiar paradigm of reading training data from the local disk and
adjusting model weights until certain convergence criteria are met. With the widespread use of mobile devices
capable of generating massive amounts of personal information backed up by the convenience of cloud data storage 
and infrastructure, the community adopted the \emph{server-centric} world-view of ML simply because it was convenient to do so.
As the training data grew in size and could no longer fit on a single machine or even several machines, we ended up collecting data
from millions of devices on one network and sending it ``sharded'' for training to another network of thousands of machines. 
In the past, this duality of responsibilities could be justified by large disparities in hardware capabilities between the two networks,
but this line is blurrier at the present time.

Sharing personal data that contributes to a global ML model and benefits everyone on the network---in many cases, the data collector the most--
can be viewed as undesirable by many privacy sensitive users, due to distrust in the data collector or risks of subpoenas, data breaches and internal threats~\cite{pew2015privacy, pew2016privacy}. Following the deployment of RAPPOR~\cite{Rappor}, there has been an increased interest in finding ways for users to contribute data to improve services they receive, but to do so in a provably private manner, even with respect to the data collector itself~\cite{eff}. This desire is often expressed by companies~\cite{wired, applekeynote2016}, presumably in part to minimize risks and exposures.

To address the privacy-utility trade-off in improving products while preserving privacy of user data even from the data collector itself, we propose a novel client-centric distributed ``Draw and Discard" machine learning framework (DDML). It provides differential privacy guarantees in the \emph{local} model of privacy in a way that satisfies all systems constraints using \emph{asynchronous} client-server communication. We call it ``Draw and Discard'' because it relies on randomly sampling and discarding of models. Specifically, DDML maintains $k$ versions (or instances) of the machine learning model on a server, from which an instance is randomly selected to be updated by a client, and, subsequently, the updated instance randomly replaces one of the $k$ instances on a server. The update is made in a differentially private manner with users' private data stored locally on their mobile devices. 

We focus our analyses and experiments with DDML on the Generalized Linear Models (GLM)~ \cite{glm}, which include regular linear and logistic regressions. GLMs provide widely-deployed and effective solutions for many applications of non-media rich data, such as event prediction and ranking. The convex nature of GLMs makes them perfect candidates to explore client-side machine learning without having to worry about convergence issues of the more complex ML models. Extension of DDML to Neural Networks and other models optimized through iterative gradient updates is relatively straightforward.

We demonstrate through modeling, analyses, experiments and practical deployment that DDML provides attractive privacy, model learning and systems properties (feasibility, scalability, efficiency and spam protection). Specifically, it offers
\begin{enumerate}[noitemsep,topsep=0pt]
\item \emph{Local differential privacy}: Through carefully calibrated noise in the model update step, the DDML design ensures local differential privacy~\cite{dwork2006calibrating}. 
\item \emph{Privacy amplification against other adversaries}: Furthermore, in DDML the full model update is performed on a client and only the updated model rather than raw gradients are sent to the server, which strengthens the privacy guarantees provided by DDML against weaker but realistic adversaries than the strongest possible adversary operating in the local model.
\item \emph{Efficient model training}: Due to the variance stabilizing property of DDML, its final model averaging and relatively frequent model updating, DDML has superior finite sample performance relative to alternative update strategies.
\item \emph{Asynchronous training}: continuous, lock-free and scalable training without pausing the process for averaging and updating on a server side.
\item \emph{Spam protection}: having $k$ different instances of the same model allows to assess whether any incoming update is fraudulent or not without knowledge of users' private data.
\item \emph{Limited server-side footprint}: store relatively little data on a server at any given time, since $k$ is usually small.
\end{enumerate}

These properties will become clearer as we define them more precisely in the following sections. 

We are not aware of any other, currently deployed distributed ML approaches, that operate in the local privacy model and use an asynchronous communication protocol. Federated Learning~\cite{McMahan2017}, which is adopted by Google and is perhaps the closest alternative, relies on server-side gradient batching and averaging.

DDML offers two major contributions. Fist, it performs direct, \emph{noisy} updates of model weights on clients, as opposed to sending \emph{raw and exact} gradients back to the server. This change offers \emph{local} differential privacy guarantees, and more importantly, requires an attacker to know the pre-image of the model (a model sent for an update to the client) that was updated to make any inference about private user data. Separation of the two critical pieces of knowledge, pre- and post-update models necessary to make any inference, \emph{in time}, especially in a high-throughput environment with $k$ instances being continuously updated, poses significant practical challenges for an adversary observing a stream of updates on the server side. We discuss this in detail in Section~\ref{sec:privacy}.

Second, its radically different server-side architecture with the ``Draw and Discard'' update strategy provides a natural way of deploying such a service in the cloud. Sharding and replication in the cloud~\cite{replication} is necessary to avoid updating entities or values too often, beyond their specified or recommended number of times per second.  Making too frequent writes to a single piece of data, be it in datastore, memcache~\cite{memcache2} or elsewhere, results in what is known as ``hotspotting''~\cite{hotspotting}. If you find yourself in a situation where several keys become ``hot'' (updated beyond their capacity), your service becomes unstable and clearly cannot accept even more traffic (unscalable). Even worse, it often negatively effects services co-hosted with yours in the cloud that you may not even own. 

Beyond these two major considerations, DDML offers a completely lock-free asynchronous, and thus, more efficient, communication between the server and clients, which is an absolute must if one is developing in a massively distributed environment~\cite{distributed}, as well as a straightforward distributed way to prevent model spamming by malicious actors, without sacrificing user privacy.

We have implemented DDML at a large tech company and successfully trained many ML models. Our applications focus on ranking items, from a few dozen to several thousands, as well as security oriented services, such as predicting how likely it is that a URL one receives from someone is phishy. Our largest models contain $\approx 50,000$ weights in size and we find $k=20$ to be the right trade-off between efficiency and scale to avoid the ``hotspotting'' issue. Currently, at peak times for several different applications, we receive approximately 200 model updates per second.

The paper is organized as follows: Section~\ref{sec:related} reviews differential privacy and related work. Section~\ref{sec:ddml} presents a detailed overview of our framework and its features, including the variance stabilizing property in Section~\ref{sec:varst}.  Section~\ref{sec:privacy} introduces our modeling of possible adversaries and discusses DDML's privacy properties with respect to them. In Section~\ref{sec:experiments}, we present experimental evaluations of DDML's performance as compared to alternatives and in Section~\ref{sec:applications} we describe the performance of a real-world application deployed using DDML. We conclude with a discussion of limitations, alternatives and avenues for future work in Section~\ref{sec:discussion}.

\section{Related Work}\label{sec:related}
Differential privacy (DP)~\cite{dwork2006calibrating} has become the de-facto standard for privacy-preserving data analysis~\cite{dwork2014algorithmic, dwork2011acm, Godel}.

A randomized algorithm $A$ is $(\epsilon, \delta)$ differentially private if for all databases $D$ and $D'$ differing in one user's data, the following inequality is satisfied for all possible sets of outputs $Y \subseteq Range(A)$:
\[Pr[A(D) \in Y] \leq \exp(\epsilon) Pr[A(D') \in Y] + \delta\]

The parameter $\epsilon$ is called the privacy loss or privacy budget of the algorithm~\cite{nissim2017differential}, and measures the increase in risk due to choosing to participate in the DP data collection. The variant of DP when $\delta=0$ is the strongest possible differential privacy variant called \textit{pure} differential privacy; whereas $\delta>0$ allows differential privacy to fail with small probability and is called \textit{approximate} differential privacy.




\textbf{ML in the Trusted-Curator Model:}
Most prior work for differentially private machine learning assumes a trusted-curator model, where the data is first collected by the company and only then a privacy-preserving computation is run on it \cite{Abadi2016, song2013stochastic, papernot2016, chaudhuri2009privacy}. The trusted-curator model is less than ideal from the user privacy perspective, as it does not provide privacy from the company collecting the data, and, in particular, leaves user data fully vulnerable to security breaches, subpoenas and malicious employees. Furthermore, even in the case of the trusted curator model, differentially private deep learning that achieves good utility with reasonable privacy parameters has been an elusive goal~\cite{shokri2015privacy, mcsherry2017, Abadi2016}. For example, the work of~\cite{Abadi2016} performs well on NIST data but struggles utility-wise on CIFAR for reasonable privacy parameters. 


\textbf{ML in the Local Model:}
The pioneering work of RAPPOR~\cite{Rappor} for industry deployment, has been followed by several recent efforts to deploy DP in the local model, i.e., guarantee DP to the user before the data reaches the collector. Privacy in the local model is more desirable from the user's perspective~\cite{wired, pew2015privacy, eff, applekeynote2016}, as in that case the user does not have to trust the data collector or the data being subject to internal or external threats to the data collector.

Since the focus on differentially private computations in the local model is recent, most, if not all, efforts to date have been limited to learning aggregate statistics, rather than trainng more complex machine learning models \cite{fanti2016building, Apple, bassily2015local, bassily2017practical, bun2017heavy}. There are also numerous results on the so-called sample complexity for the local model, showing that the number of data points needed to achieve comparable accuracy is significantly higher in the local model than in the trusted curator model~\cite{kairouz2014extremal}. 

DDML can be considered an extension of the existing literature on locally private learning. In particular, it supplements private histogram collection of RAPPOR \cite{Rappor} and learning simple associations~\cite{fanti2016building} by allowing estimation of arbitrary conditional expectations. While RAPPOR allows to estimate marginal and joint distributions of categorical variables, DDML provides a principled framework for estimating conditional distributions in a privacy-preserving manner. For example, one can estimate the average value of $Y$ given $p$ features $X_1, \ldots, X_p$ by fitting a regular linear model described by
\begin{equation*}
E(Y) = b_0 + b_1 * X_1 + \ldots + b_p * X_p.
\end{equation*}

\section{Draw and Discard Machine Learning}\label{sec:ddml}
In this section, we present our ``Draw and Discard'' machine learning framework characterized by its two major components: client-side noise addition and ``Draw and Discard'' server architecture. Together, these contribute to strong differential privacy guarantees for client data while supporting efficient, in terms of model training, client-server data consumption. 

At the heart of DDML is the server-side idea of maintaining and randomly updating one of the $k$ model instances. This architecture presents a number of interesting properties and contributes to many aspects of the framework's scalability, privacy, and spam and abuse protections.

DDML is not a new ML framework \textit{per se}. It is model-agnostic and, in principle, works with any supervised ML model, though details vary for the client-side update and noise addition. The scope of this work is limited to Generalized Linear Models (GLMs), and we focus specifically on logistic regression to show an example of an ML model that is very popular in practice. 
We give a brief overview of GLMs and fully describe DDML client and server architectures next.

\subsection{GLMs}
In GLMs \cite{glm}, the outcome or response variable $Y$ is assumed to be generated from a particular distribution in the exponential family that includes normal (regular linear model), binomial (logistic regression) and Poisson (Poisson regression) distributions, among many others. Mathematically, GLMs model the relationship between response $Y$ and features $X_1, \ldots, X_p$ through a link function $g$, whose exact form depends on the assumed distribution:
\begin{equation}
E(Y) = g^{-1}(b_0 + b_1 * X_1 + \ldots + b_p * X_p)
\end{equation}
\vspace{-0.25in}

To train GLM models on clients, we use Stochastic Gradient Descent (SGD) for maximum likelihood estimation, as discussed below.

\subsection{DDML Client-Side Update}
SGD is a widely used iterative procedure for minimizing an objective function
\begin{equation}
Q(B) = \frac{1}{N}\sum_{s=1}^NQ_s(B),
\end{equation}
where $B = \{b_1, \ldots, b_p\}$ is the vector of weights to be estimated and $Q_s$ is a functional component associated with the $s$th observation. Traditional optimization techniques require differentiating $Q(B)$, which, in turn, requires access to all $N$ data points at once. SGD approximates the gradient $\Delta Q(B)$ with $\Delta Q_s(B)$, computed on a small batch of $N_c$ observations available on a single client
\begin{equation}
\Delta Q_{N_c}(B) = \frac{1}{N_c}\sum_{s=1}^{N_c}\Delta Q_s(B).
\end{equation} 
\vspace{-0.15in}

To provide local privacy by adding random Laplace noise, a differentially-private SGD (DP-SGD) update step is performed on a client using the $N_c$ observations stored locally 
\begin{equation}
B^{t+1} = B^t - \gamma \Delta Q_{N_c}(B^t) + L\left(0, \frac{\Delta f}{\epsilon}\right),
\end{equation}
where $\gamma$ is a learning rate and $L\left(0, \frac{\Delta f}{\epsilon}\right)$ denotes the Laplace distribution with mean 0 and scale $\frac{\Delta f}{\epsilon}.$ $\Delta f$ is called sensitivity in the differential privacy literature and $\epsilon$ is the privacy budget \cite{dwork2014algorithmic}. 

For GLMs, assuming all features $X_i$ are normalized to the $[0, 1]$ interval and the average gradients $\frac{1}{N_c}\sum_i^{N_c}(\hat{Y_i} - Y_i)X_i$ are clipped to $[\text{-1}, 1]$ (indicated by $\Vert A \Vert_{[\text{-}1, 1]}$), the differentially-private update step becomes 
\begin{equation*}
B^{t+1} = B^t - \gamma\Big\Vert\frac{1}{N^c}\sum_{i=1}^{N_c}(\hat{Y_i} - Y_i)X_i\Big\Vert_{[\text{-}1, 1]} + L\left(0, \frac{2\gamma}{\epsilon}\right).
\end{equation*}
Here, $\hat{Y}$ is the predicted value of $Y$ given a feature vector $X$ and the model $B^t$. For logistic regression, if all features are normalized to $[0, 1]$, no gradient clipping is necessary.

\begin{algorithm}[tb]
   \caption{DDML Algorithm for GLMs (client side). Parameters: \\$Y$ - response value, $\hat{Y}$ - predicted value.\\ $X$ - feature vector, $B$ - a set of model weights. \\ $\gamma$ - learning rate, $\epsilon$ - privacy budget. \\ $L(\mu, s)$ - Laplace distribution with mean $\mu$ and scale $s$.}
\label{alg}
\begin{algorithmic}
   \STATE Normalize response $Y$ and features $X$ to $[0, 1]$
   \STATE If $N_c > 0$, request model $B^t$ from the server
   \STATE Compute clipped average gradient: $\Vert G\Vert_{[\text{-}1, 1]} = \Vert\frac{1}{N^c}\sum_{i=1}^{N_c}(\hat{Y_i} - Y_i)X_i\Vert_{[\text{-}1, 1]}$
   \STATE Update model: {$B^{t+1} = B^t - \gamma\Vert G\Vert_{[\text{-}1, 1]} + L\left(0, \frac{2\gamma}{\epsilon}\right)$}
   \STATE Return model $B^{t+1}$ to the server
\end{algorithmic}
\label{alg1}
\end{algorithm}


DDML client-side architecture is shown in Algorithm~\ref{alg1}.

\subsection{DDML Server-Side Draw and Discard}
\begin{algorithm}[tb]
   \caption{DDML Algorithm (server side). Parameters: \\$k$ - the number of models, $B$ - a set of model weights.}
\begin{algorithmic}
   \STATE Initialize $k$ models 
   \FOR{each requested model update $t$}
       \STATE Pick a random model instance $B^t$
       \STATE Send $B^t$ to a client
       \STATE Receive updated $B^{t+1}$ from a client
       \STATE Replace a random instance of the $k$ models with $B^{t+1}$
   \ENDFOR
   \STATE Prediction: average $k$ model instances
\end{algorithmic}
\label{alg2}
\end{algorithm}

While maintaining the $k$ model instances on a server ($k$ versions of the same model with slightly different weights), we randomly ``draw'' one of the $k$ instances, update it on a client and put it back into the queue by ``discarding'' an instance uniformly at random. With probability $\frac{1}{k}$, we will replace the same instance, while with probability $\frac{k-1}{k}$, we will replace a different one.\footnote{This is only approximately correct, since in a high-throughput environment, another client request could have updated the same model in the meantime.}

This seemingly simple scheme has significant practical implications for performance, quality, privacy, and anti-spam, which we discuss in Section~\ref{sec:ddml-properties}.

DDML server-side architecture is shown in Algorithm~\ref{alg2}.

\paragraph{Model Initialization:}
We initialize our $k$ models randomly from a normal distribution with means 
$
b^0_0, \ldots, b^0_p,
$
which are usually taken to be 0 in the absence of better starting values and variance 
$
\sigma^2_k = \frac{k}{2}\sigma^2,
$
where $\sigma^2$ is the variance of the Laplace noise added on a client side.

Because of the variance-stabilizing property (to be discussed in detail in Section~\ref{sec:varst}),  $\sigma^2_k$ will remain the same in expectation even after a large number of updates. It is crucial for our spam detection solution that this initialization happens correctly and the right amount of initial noise is added to calibrate the update step on a client with the variance of the $k$ instances on the server side.

\paragraph{Model Averaging:}
We average weights from all $k$ instances for final predictions. Of course, depending on application, another way for using $k$ versions of the same model could be preferred, such as averaging predicted values from each instance, for example. 

\subsection{Properties and Features of DDML}\label{sec:ddml-properties}
We now describe properties of DDML that distinguish it from existing solutions and make it feasible and scalable for practical deployments.

\begin{figure}[tb]
\centering
\includegraphics[width= 1.0\linewidth]{./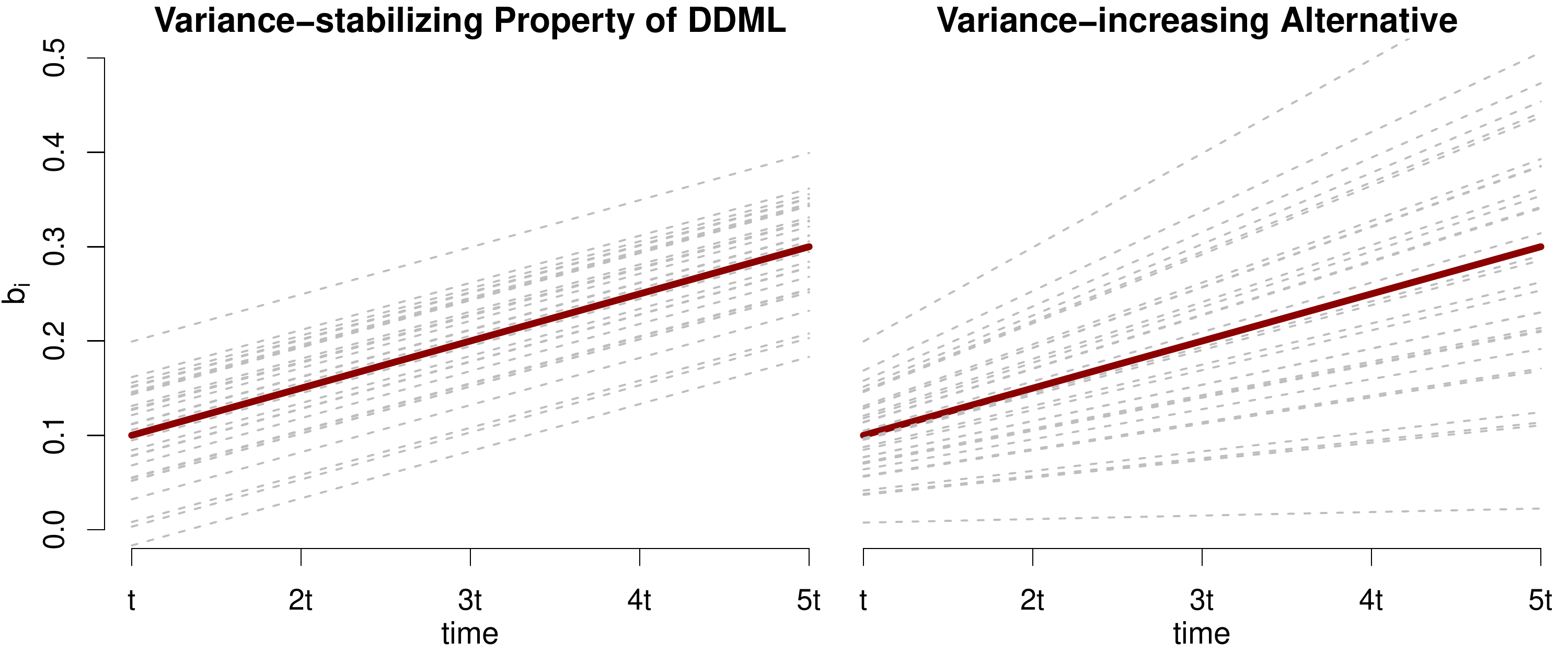}
\caption{Variance-stabilizing property of ``Draw and Discard''. The left panel shows $k$ models having the same intra-model variance around the \emph{average} model (dark red line) over time. This is in contrast to the right panel where they diverge (intra-model variance increases over time).}
\label{variance-stab}
\end{figure}

\subsubsection{Variance-stabilizing Property of DDML}\label{sec:varst}
Having introduced an additional source of variation due to having $k$ model instances, an intra-model variance, it is important to understand its nature and magnitude, especially relative to the variance of the noise added on the client through the Laplace mechanism. It is also of interest to understand how it changes over time.
One of the remarkable properties of the ``Draw and Discard'' algorithm with $k > 1$ is its variance-stabilizing property, shown schematically in the first panel of Figure~\ref{variance-stab}. We prove in Theorem~\ref{thm:vars} that the expected intra-model variance of the $k$ instances is equal to $\frac{k}{2}\sigma^2$ and remains unchanged after an infinite number of updates when adding noise with mean $0$ and variance $\sigma^2$. 

\begin{theorem}\label{thm:vars}
Let there be $k$ models where each weight\\ $B_1, \ldots, B_i, \ldots, B_p$ has a mean $\mu_i$ and variance $\frac{k}{2}\sigma_i^2$ with $i \in [1, p]$. Selecting one model at random, adding noise with mean 0 and variance $\sigma_i^2$ to each $B_i$, and putting the model back with replacement does not change the expected variance of the weights (i.e., they remain distributed with variance $\frac{k}{2}\sigma_i^2$).
\end{theorem}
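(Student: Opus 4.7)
The plan is to reduce to one coordinate at a time, since both the added noise and the draw/discard step act coordinate-wise, so I fix $i$, drop the subscript, and write $B_1,\ldots,B_k$ for the $k$ current values and $\sigma^2$ for the noise variance. I interpret the ``variance of the $k$ weights'' as the empirical variance $V=\tfrac{1}{k-1}\sum_m(B_m-\bar B)^2$, which matches the intra-model spread of Fig.~\ref{variance-stab} and, for i.i.d.\ $B_m$, equals the common marginal variance in expectation. Parameterizing one update by independent uniform draws $J,L\in[k]$ and noise $\eta$ with $E\eta=0$, $\mathrm{Var}(\eta)=\sigma^2$, the new state is $B_L^{\text{new}}=B_J+\eta$ with the other slots unchanged. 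The goal is then to show that $E[V^{\text{new}}]=\tfrac{k}{2}\sigma^2$ whenever $E[V]=\tfrac{k}{2}\sigma^2$.

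Writing $T=\sum_m B_m^2$ and $U=\sum_m B_m$, the identity $(k-1)V=T-U^2/k$ reduces the question to computing $E[T^{\text{new}}\mid B]$ and $E[(U^{\text{new}})^2\mid B]$. The first moment is immediate: $E[B_L^2\mid B]=E[B_J^2\mid B]=T/k$ together with $E[\eta^2]=\sigma^2$ give $E[T^{\text{new}}\mid B]=T+\sigma^2$. The second requires expanding $(U-B_L+B_J+\eta)^2$ and crucially using $E[B_J B_L\mid B]=(U/k)^2$, which follows from the independence of the draw index $J$ and the discard index $L$; carrying the expansion through yields $E[(U^{\text{new}})^2\mid B]=U^2+2T/k-2U^2/k^2+\sigma^2$.

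Combining these moments in $(k-1)E[V^{\text{new}}\mid B]=E[T^{\text{new}}\mid B]-E[(U^{\text{new}})^2\mid B]/k$ and simplifying gives the clean affine recursion $E[V^{\text{new}}\mid B]=(1-2/k^2)\,V+\sigma^2/k$, whose unique fixed point is $V^\ast=\tfrac{k}{2}\sigma^2$; taking outer expectations proves the theorem. As a bonus, since $1-2/k^2<1$ for all $k\geq 2$, this fixed point is attractive, so the variance also \emph{converges} to $\tfrac{k}{2}\sigma^2$ from any initialization, as Fig.~\ref{variance-stab} suggests. The step most prone to error is the cross term $E[B_J B_L\mid B]$: it must be the product $(U/k)^2$ from the independence of $J$ and $L$, not the ``second moment of a typical draw'' $T/k$; this single substitution is what scales the $\sigma^2$ contribution as $1/k$ and makes a finite fixed point exist at all.
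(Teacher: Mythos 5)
Your proof is correct, and it takes a genuinely different route from the paper's. The paper applies the Law of Total Variance, splitting on whether the drawn instance replaces itself (probability $\frac{1}{k}$, which inflates the spread to $(\frac{k}{2}+\frac{1}{k})\sigma^2$) or a different instance (probability $\frac{k-1}{k}$, which deflates it to $\frac{k^3-k^2-2}{2k(k-1)}\sigma^2$ via a mixture-variance computation over the $2$ correlated copies and the $k-2$ untouched ones), and then checks that the weighted average is exactly $\frac{k}{2}\sigma^2$. You avoid the case split entirely by tracking $T=\sum_m B_m^2$ and $U=\sum_m B_m$ and deriving the single affine recursion $E[V^{\text{new}}\mid B]=(1-\tfrac{2}{k^2})V+\tfrac{\sigma^2}{k}$; I verified the moment computations, and the step you flag as delicate, $E[B_JB_L\mid B]=U^2/k^2$ from independence of the draw and discard indices, is handled correctly. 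The paper's decomposition buys the intuition stated after the theorem (self-replacement increases variance, cross-replacement decreases it, and the two effects cancel), which your argument hides inside the algebra. Your recursion buys something the paper only asserts informally via Figure~\ref{variance-stab} and the phrase ``will remain the same in expectation'': since $1-\tfrac{2}{k^2}<1$, the value $\tfrac{k}{2}\sigma^2$ is not merely invariant but a globally attractive fixed point, so the initialization variance $\sigma^2_k=\tfrac{k}{2}\sigma^2$ is recovered even if the models are initialized with the wrong spread. That is a strictly stronger and practically relevant conclusion (it underpins the spam-detection calibration), obtained with less computation.
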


The intuition behind this theorem is that with probability $\frac{1}{k}$, we replace the same model, which increases the variance of the $k$ instances. This increase, however, is exactly offset by the decrease in variance due to the cases when we replace a different model with probability of $\frac{k-1}{k}$ because original and updated models are highly correlated.

\begin{proof}
We use the Law of Total Variance
$$
Var(Y) = E(Var(Y|X)) + Var(E(Y|X)).
$$

Replacing the same model as drawn occurs with probability $\frac{1}{k}$ and the variance after the update for each $B_i$ is equal to 
\begin{eqnarray*}
Var(B_i| j \to j) &=& \frac{1}{k}\left(\frac{k}{2} + 1\right)\sigma_i^2 + \frac{k-1}{k}\frac{k}{2}\sigma_i^2 \\
        &=& \left(\frac{1}{2} + \frac{1}{k} + \frac{k-1}{2}\right)\sigma_i^2 \\
        &=& \left(\frac{k}{2} + \frac{1}{k}\right)\sigma_i^2.
\end{eqnarray*}

Replacing a different model partitions the space into 2 and $(k-2)$ models which makes $Var(E(Y|X))$ non-zero. The overall mean after the update becomes
\begin{eqnarray*}
\mu_{B_i} &=& \frac{2\mu_1 + (k-2)\mu}{k},  
\end{eqnarray*}
where $\mu_1$ is the mean of the model selected and model replaced and has a distribution with mean $\mu$ and variance $\frac{k}{2}\sigma_i^2$. 

Then $Var(B_i|j \to j')$
\begin{eqnarray*}
         &=& \frac{2}{k}\frac{1}{2}\sigma_i^2 + \frac{k-2}{k} \frac{k}{2}\sigma_i^2 \\
         && + \frac{2E[(\mu_1 - \mu_{B_i})^2] + (k-2)E[(\mu-\mu_{B_i})^2]}{k-1} \\
         &=& \frac{1}{k}\sigma_i^2 + \frac{k-2}{2}\sigma_i^2 \\
         && + \frac{2}{k-1}\Bigl(\frac{k-2}{k}\Bigr)^2E[(\mu_1 - \mu)^2] \\
         && + \frac{k-2}{k-1}\Bigl(\frac{2}{k}\Bigr)^2E[(\mu_1 - \mu)^2] \\
         &=& \frac{1}{k}\sigma_i^2 + \frac{k-2}{2}\sigma_i^2 + \frac{2}{k-1}\left(\frac{k-2}{k}\right)^2\frac{k}{2}\sigma_i^2 \\
         && + \frac{k-2}{k-1}\left(\frac{2}{k}\right)^2\frac{k}{2}\sigma_i^2 \\
     &=& \left(\frac{1}{k} + \frac{k-2}{2} + \frac{(k-2)^2}{k(k-1)} + \frac{2(k-2)}{k(k-1)}\right)\sigma_i^2 \\
     &=& \frac{2k - 2 + k(k-2)(k-1) + 2(k-2)^2 + 4k - 4}{2k(k-1)}\sigma_i^2 \\
     &=& \frac{k^3 - k^2 - 2}{2k(k-1)}\sigma_i^2.
\end{eqnarray*}
Note that the the variance component must be computed with $k-1$ and not $k$ because of the finite nature of $k$ in this case.

Putting it all together,
\begin{eqnarray*}
Var(B_i) &=& \frac{1}{k}Var(B_i|j \to j) + \frac{k-1}{k} Var(B_i | j \to j') \\
         &=& \frac{1}{k}\left(\frac{k}{2} + \frac{1}{k}\right)\sigma_i^2 + \frac{k-1}{k}\frac{k^3 - k^2 - 2}{2k(k-1)}\sigma_i^2 \\
        &=& \left(\frac{1}{2} + \frac{1}{k^2} + \frac{k^3 - k^2 - 2}{2k^2} \right)\sigma_i^2 \\
        &=& \frac{k}{2}\sigma_i^2.
\end{eqnarray*}
\end{proof}

DDML, due to its ``Draw and Discard'' update strategy, dissipates the additional intra-model variability through random model discarding which is particularly important when the model has converged and the contractive pull of the SGD is either small or non-existent, at a time when we continue training the model and adding the Laplace noise on the client. 

\subsubsection{Asynchronous Learning.} Maintaining $k$ model instances allows for a \emph{scalable}, \emph{simple} and \emph{asynchronous} model, updating with thousands of update requests per second. It is trivial to implement, relies on its inherent randomness for load distribution, and requires no locking mechanism that would pause the learning process to aggregate or summarize results at any given time.

\subsubsection{Differential Privacy.} Due to random sampling of model instances, the DDML server architecture uniquely contributes to differential privacy guarantees as will be discussed in Section~\ref{sec:privacy}. Specifically, by keeping \emph{only} the last $k$ models from clients, discarding models at random, and avoiding server-side data batching, the DDML fulfills the goal of keeping as little data as possible on the server. Through a nuanced modeling of possible adversaries (Section~\ref{sec:adversary-modeling}) and corresponding privacy analyses, DDML is able not only to provide privacy guarantees in the local model, but also improve these privacy guarantees against weaker but realistic adversaries. 

\subsubsection{Ability to Prevent Spam without Sacrificing Privacy.} The $k$ instances are instrumental in spam and abuse prevention, which is a real and ubiquitous pain point in all major client-server deployments. Nothing prevents a client from sending an arbitrary model back to the server. We could keep track of which original instance was sent to each client; however, this would negate the server-side privacy benefits and pose implementation challenges due to asynchronicity. In DDML, having $k$ replicates of each weight $b_i$ allows us to compute their means $\mu$ and standard deviations $\sigma$ and assess whether the updated model is consistent with these weight distributions (testing whether the updated value is within $[\mu - t\sigma, \mu + t\sigma]$), removing the need to make trade-offs between privacy and anti-abuse.

\subsubsection{Improved Performance.} Lastly, averaging $k$ models for prediction naturally extends DDML into \emph{ensemble} models, which have been shown to perform well in practice. Currently, the best-performing models on the MNIST hand-written digit classification are neural net committees\footnote{http://yann.lecun.com/exdb/mnist/}. In addition, as demonstrated by our real-world example in Figure~\ref{exp1}, reasonable, practical $k\in[20, 30]$ range outperforms server-side batching of 1,000+ gradients. In fact, empirical results show (see Figure~\ref{exp1}) that equivalent performance in terms of loss and accuracy is gained when $k$ is equal to the square root of the server batching size (this is due to having made an approximately the same number of update steps after taking model discarding into account). The intuition for this is quite obvious: the convex nature of GLMs is more suitable for \emph{many small} step updates in \emph{approximate} directions than a much \emph{smaller} number of steps in \emph{precise} directions \cite{masters2018revisiting}.

\subsection{Parameter Tuning and Clipping}
Choosing the right learning rate $\gamma$ is critical for model convergence. If chosen too small, the learning process proceeds too slowly, while selecting a rate too large can lead to oscillating jumps around the true minimum. We recommend trying several values in parallel and evaluating model performance to select the best one. In the future, we plan to explore adaptive learning rate methods in which we systematically decrease $\gamma$ (and, therefore, add noise) as the model converges.

By clipping gradients to a $[\text{-}1, 1]$ range, we ensure that the sensitivity of our update is $2\gamma$. In practice, the vast majority of gradients, especially as the model becomes sufficiently accurate, are much smaller in absolute terms and, thus, could be clipped more aggressively. Clipping to a $[\text{-}0.1, 0.1]$ range would reduce sensitivity by a factor of 10 to $\gamma/5$.

\section{Privacy of DDML}\label{sec:privacy}
We now discuss differential privacy guarantees provided by DDML. Our analyses are with respect to feature-level differential privacy, as discussed in Section~\ref{sec:discussion}, but they can be easily extended to model-level privacy by scaling up the noise by the number of features or by adjusting the norm of the gradient in Algorithm~\ref{alg1}.

\subsection{Adversary Modeling}\label{sec:adversary-modeling}
The main innovation of our work with respect to privacy analyses comes from more nuanced modeling of heterogeneous adversaries, and the demonstration that the privacy guarantees a client obtains against the strongest possible adversary operating in the local model of privacy are strengthened by DDML against weaker but realistic adversaries.

Our work introduces and considers three kinds of adversaries, differing in the power of their capabilities:

\textbf{I (Channel Listener):} is able to observe the communication channel between the client and the server in real time and, therefore, sees both the model instance sent to the client and the updated model instance sent from the client to the server.

\textbf{II (Internal Threat):} is able to observe the models on the server at a given point in time; i.e., this adversary can see model instances $1$ through $k$ but lacks the knowledge of which of the $k$ instances was the pre-image for the latest model update due to lack of visibility into the communication channel.

\textbf{III (Opportunistic Threat):} can observe a model instance at a random time, but has no knowledge of what was the state of the model weights over the last $Tk$ updates, i.e., this adversary can, for example, see models at regular time intervals $Tk$ which is much larger than 1. Clients themselves are such threats as they periodically receive a model to update.

The first adversary is the most powerful, and the privacy guarantees we provide against this adversary (Section~\ref{sec:privacy-local}) correspond to the local model of privacy commonly considered by the differential privacy community (Section~\ref{sec:related}).

The second adversary is modeling ability to access the $k$ model instances from within the entity running DDML. It is reasonable to assume that such an adversary may be able to obtain several snapshots of the models, though it will be technically infeasible and/or prohibitively costly to obtain snapshots at the granularity that allows observation of $k$ models before and after every single update. In other words, this adversary may see the models knowing that they have just been updated with a particular client's data, but the adversary would not have any knowledge of which models was the source or pre-image for the latest update. The privacy guarantee against this adversary (Section~\ref{sec:privacy-subpoena}) will be stronger than against the Channel Listener.

The third type of adversary is the weakest and also the most common. Occasional access to models allows attackers to obtain a snapshot of $k$ model instances (in a case of an internal threat) or just a single model instance (in a case of a client who receives a model for an update) after a reasonably large number of updates $Tk$. Because $Tk$ independent noise additions have occurred in the meantime, each model instance has received an expected $T$ updates and therefore, $T$ independent noise additions after a particular user's update. Every user benefits from this additional noise to a different degree, depending on the the order in which their data was ingested, and, in expectation and with high probability, enjoys significantly stronger differential privacy guarantees against this adversary than those of the local model, as will be shown in Section~\ref{sec:privacy-opportunistic}.

The table below summarizes the privacy results of this section:
\begin{tabular}{ c | c | c | c}
  Adversary & I & II & III \\
  \hline
  Expected Privacy Guarantee & $\epsilon$ & $\frac{k-1}{2k} \epsilon$ & $\frac{1}{k \sqrt{T}} \epsilon$
\end{tabular}

\subsection{Privacy against Channel Listener (Adversary~I)}\label{sec:privacy-local}
DDML guarantees $\epsilon$-differential privacy against adversary I. The claim follows directly~\cite{dwork2014algorithmic} from our choice of the scale of Laplace noise in the client-side Algorithm~\ref{alg1} and the observation that clipping the gradient in Algorithm~\ref{alg1} ensures sensitivity of at most $2\gamma$. 

It is possible to replace the Laplace noise used in the client-side Algorithm~\ref{alg1} with Gaussian. In that case, the variance of the Gaussian noise would need to be calibrated according to Lemma 1 from~\cite{jl} or Theorem A.1 of~\cite{dwork2014algorithmic}.




\subsection{Privacy against Internal Threat (Adversary~II)}\label{sec:privacy-subpoena}
We now show that the DDML server-side design has privacy amplification properties whenever one considers an adversary of type II. We illustrate that by computing the expected privacy loss against adversary II, where the expectation is taken over the random coin tosses of the DDML server-side Algorithm~\ref{alg2} that chooses the model instances to serve and replace.

\begin{lemma}\label{lem:symmetry-corrected}
The expected privacy loss against adversary II is $\frac{k-1}{k} \cdot \frac{\epsilon}{2}$, where the expectation is taken over the random coin tosses of the DDML server-side Algorithm~\ref{alg2} that chooses the model instances to serve and replace.
\end{lemma}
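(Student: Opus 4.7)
The plan is to decompose the expectation by conditioning on the two internal coin tosses of the DDML server in Algorithm~\ref{alg2}: the index $j \in \{1,\ldots,k\}$ of the drawn model sent to the client, and the index $j' \in \{1,\ldots,k\}$ of the slot overwritten when the updated model $B^{t+1} = B^t - \gamma\nabla Q + L$ is returned. These are independent and uniform, so the relevant split is between the self-replacement event $j = j'$, which has probability $1/k$, and the cross-replacement event $j \neq j'$, which has probability $(k-1)/k$. I would compute the conditional expected privacy loss in each branch and combine by the tower property.

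In the self-replacement branch, the pre-image $B^t$ is overwritten by $B^{t+1}$, so the snapshot seen by Adversary~II contains the post-image but not the pre-image. Since this adversary has no visibility into the communication channel and hence does not know the value $B^t$ that was sent, and since by Theorem~\ref{thm:vars} the marginal distribution of the overwritten slot has variance $\tfrac{k}{2}\sigma^2$, which dominates the Laplace variance $\sigma^2$, the conditional distribution of the new slot contents is essentially independent of the distinguishing user input. I would formalize this with an independence-of-prior argument and conclude a conditional privacy loss of $0$ on this branch.

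In the cross-replacement branch both $B^t$ (still in slot $j$) and $B^{t+1}$ (now in slot $j'$) are present in the snapshot, but because $(j,j')$ was drawn uniformly over ordered pairs of distinct indices, Adversary~II has no way to tell which slot is the pre-image. I would reduce the adversary's task to distinguishing the unordered pair $\{B^t,\; B^t - \gamma\nabla Q + L\}$, whose distinguishing statistic collapses to the signed difference $\pm(\gamma\nabla Q + L)$ with a uniformly random sign. Using the symmetry of the Laplace distribution, I would then compute the log-likelihood ratio of the resulting two-component mixture of shifted Laplace densities at a neighboring input pair $x, x'$ and show that averaging over the equally likely sign hypotheses contracts the worst-case loss from $\epsilon$ down to $\epsilon/2$.

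Combining the two branches by the law of total expectation yields
\[
\mathbb{E}[\epsilon_{\text{II}}] \;=\; \tfrac{1}{k}\cdot 0 \;+\; \tfrac{k-1}{k}\cdot\tfrac{\epsilon}{2} \;=\; \tfrac{k-1}{k}\cdot\tfrac{\epsilon}{2},
\]
as claimed. I expect the cross-replacement branch to be the hard part: the slot-symmetry intuition is clean, but turning it into a rigorous bound requires an explicit mixture calculation on the two shifted Laplace densities and care with what ``expected privacy loss'' means when the prior on $B^t$ is not literally flat. The self-replacement branch is easier technically but depends on an independence-of-prior assumption that should be made explicit, motivated by the accumulated noise from prior updates and by the variance inflation guaranteed by Theorem~\ref{thm:vars}.
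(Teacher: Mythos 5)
Your proposal is correct and follows essentially the same route as the paper: the same $\tfrac{1}{k}$ versus $\tfrac{k-1}{k}$ split on the server's draw/discard coins, a conditional privacy loss of $0$ when the pre-image is discarded, and a symmetric two-component mixture of shifted Laplace densities (maximized when the instances coincide and the inputs differ by the full sensitivity) yielding $\tfrac{\epsilon}{2}$ in the cross-replacement branch, combined by total expectation. The only cosmetic difference is that the paper justifies the zero-loss branch by asserting that $B^*$ gives equal support to every client input once the pre-image is unobserved, rather than via the variance-stabilization theorem, and it handles $k>2$ by bounding the ratio of sums over all ordered slot pairs by the worst single pair --- both of which match the flat-prior caveat and slot-symmetry reduction you already flag.
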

\begin{proof}

Recall the DDML algorithm and the adversary model. Adversary II knows that either 1) the client's update overwrote the previous model, so the model instance prior to the update is no longer among the $k$ they can see, or 2) the client updated an existing model that is still observable among the $k$, but the adversary doesn't know which one was updated. We will call the model that was sent to the client the \textit{pre-image} and the resulting returned model $B^*$. 

Because of the design of DDML, the first scenario occurs with probability $\frac{1}{k}$ and the second scenario occurs with probability $\frac{k-1}{k}$. Moreover, if we are in the first scenario (i.e., the pre-image is no longer among the $k$ models due to the ``discard" part of DDML), then the client has perfect privacy against adversary II. Indeed, due to the nature of the update step in GLM, $B^*$ provides equal support for any client input when the pre-image is unknown. In other words, the privacy loss in the first scenario, $\epsilon_1$, is $0$.

We now compute the privacy loss in the second scenario, when the client updated an existing model that's still observable among the $k$, but the adversary doesn't know which one was updated. We first do the analysis for $k=2$, and then generalize it to any $k$.

In this case, the privacy loss $\epsilon_2$ is defined as \[\exp(\epsilon_2) = \max_{B^1, B^2, a, a'}  \frac{Pr[B^1, B^2 | a]}{Pr[B^1, B^2 | a']} , \] where the probability is taken over the random choices of Algorithms~\ref{alg1} and~\ref{alg2}, $B^1, B^2$ are all possible outputs in the range of Algorithm~\ref{alg1}, and $a$ and $a'$ are the private values of the client (in DDML's case, the clipped average gradients in $[-1, 1]$).

Expanding to account for the uncertainty of adversary II of whether $B^2$ is the updated model and $B^1$ -- its pre-image, or vice versa, we have
$$
{\tiny e^{\epsilon_2} = \max_{B^1, B^2, a, a'} \frac{0.5 Pr[B^1 | a, B^2 \text{ pre-image}] + 0.5 Pr[B^2 | a, B^1 \text{ pre-image}]}{0.5 Pr[B^1 | a', B^2 \text{ pre-image}] + 0.5 Pr[B^2 | a', B^1 \text{ pre-image}]}},
$$
with probabilities now being taken over the random choices of the client-side Algorithm~\ref{alg1}.

Plugging in for the noise introduced by Algorithm~\ref{alg1}, we have
\tiny
\begin{eqnarray*}
& & e^{\epsilon_2} \\
      &=& \max_{B^1, B^2, a, a'} \frac{Pr[B^1 = B^2 - \gamma a + L(0, \frac{2\gamma}{\epsilon})] + Pr[B^2 = B^1 - \gamma a + L(0, \frac{2\gamma}{\epsilon})]}{Pr[B^1 = B^2 - \gamma a' + L(0, \frac{2\gamma}{\epsilon})] + Pr[B^2 = B^1 - \gamma a' + L(0, \frac{2\gamma}{\epsilon})]} \\
      &=& \max_{B^1, B^2, a, a'} \frac{Pr[L(0, \frac{2\gamma}{\epsilon}) = B^1 - B^2 + \gamma a] + Pr[L(0, \frac{2\gamma}{\epsilon}) = B^2 - B^1 + \gamma a]}{Pr[L(0, \frac{2\gamma}{\epsilon}) = B^1 - B^2 + \gamma a'] + Pr[L(0, \frac{2\gamma}{\epsilon}) = B^2 - B^1 + \gamma a']}.
\end{eqnarray*}
\normalsize

By properties of Laplace noise,\\
$$
e^{\epsilon_2} = \max_{B^1, B^2, a, a'} \frac{\exp\Bigl(- \frac{|B^1 - B^2 + \gamma a|}{\frac{2\gamma}{\epsilon}}\Bigr) + \exp\Bigl(- \frac{|B^2 - B^1 + \gamma a|}{\frac{2\gamma}{\epsilon}}\Bigr)}{\exp\Bigl(- \frac{|B^1 - B^2 + \gamma a'|}{\frac{2\gamma}{\epsilon}}\Bigr) + \exp\Bigl(- \frac{|B^2 - B^1 + \gamma a'|}{\frac{2\gamma}{\epsilon}}\Bigr)}.
$$

Case analysis shows that the maximum is achieved when $B^1=B^2$ and $a'=-1, a=0$ or $a'=1, a=0$.
Thus, 
$e^{\epsilon_2} = \frac{2}{2 \exp(-\frac{\epsilon}{2})}$ or $\epsilon_2 = 0.5\epsilon.$

Therefore, the expected privacy loss for $k=2$ is
$$
0.5 \epsilon_1 + 0.5 \epsilon_2 = 0.25 \epsilon.
$$

Given the result for $k=2$, it can be shown that the maximum for $\epsilon_2$ in the case of $k>2$ models is also achieved when all of the model instances are equal, and the updates differ by 1 in absolute value.  Indeed, 
\tiny
\begin{eqnarray*}
&  & \exp \left(\epsilon_2\right) \\ 
 &=& \max_{B^i, B^j, a, a', {i \neq j}} \frac{\sum_{i, j, i \neq j} \left( \exp\Bigl(- \frac{|B^i - B^j + \gamma a|}{\frac{2\gamma}{\epsilon}}\Bigr) + \exp\Bigl(- \frac{|B^j - B^i + \gamma a|}{\frac{2\gamma}{\epsilon}}\Bigr) \right)}{\sum_{i, j, i \neq j} \left(\exp\Bigl(- \frac{|B^i - B^j + \gamma a'|}{\frac{2\gamma}{\epsilon}}\Bigr) + \exp\Bigl(- \frac{|B^j - B^i + \gamma a'|}{\frac{2\gamma}{\epsilon}}\Bigr)\right)} \\
 & \leq &
 \max_{B^i, B^j, a, a', {i \neq j}} \max_{i, j} \frac{\exp\Bigl(- \frac{|B^i - B^j + \gamma a|}{\frac{2\gamma}{\epsilon}}\Bigr) + \exp\Bigl(- \frac{|B^j - B^i + \gamma a|}{\frac{2\gamma}{\epsilon}}\Bigr)}{\exp\Bigl(- \frac{|B^i - B^j + \gamma a'|}{\frac{2\gamma}{\epsilon}}\Bigr) + \exp\Bigl(- \frac{|B^j - B^i + \gamma a'|}{\frac{2\gamma}{\epsilon}}\Bigr)} \\ 
& \leq & \exp \left(\frac{\epsilon}{2} \right)
\end{eqnarray*}
\normalsize

Hence, in the case of $k$ models the overall expected privacy loss is $\frac{1}{k} \epsilon_1 + \frac{k-1}{k} \epsilon_2 = \frac{k-1}{k} \cdot \frac{\epsilon}{2}$, as desired.
\end{proof}

Note that the largest privacy loss is achieved when all model instances are identical, which is consistent with intuition: when all model instances are identical, the privacy amplification against adversary II comes only from the ``discard" step of DDML; whereas when the model instances held by the server are non-identical, there's additional benefit from the uncertainty introduced by the server-side model management of Algorithm~\ref{alg2} against adversary II. Specifically, the adversary does not know which model instance was the one returned and which was the pre-image, providing additional privacy amplification. 

At first, the privacy amplification of $\frac{2k}{k-1}$ for adversary II over adversary I, may seem insignificant. This is not the case for two reasons: first, the constants in privacy loss matter, since, by the definition of differential privacy, the privacy risk incurred by an individual choosing to contribute their data grows exponentially with the growth of the privacy loss variable~\cite{nissim2017differential, Registry}. Since differential privacy started to gain traction in industry, there has been significant debate devoted to establishing what is a reasonable privacy loss rate and to optimizing the analyses and algorithms to reduce the privacy loss needed~\cite{wired2017, Tang, pate2}.

Second, the privacy loss of Lemma~\ref{lem:symmetry-corrected} is very unlikely to be realized in practice, as the scenario of all of the model instances being identical is unlikely. Specifically, the probability of how unlikely it is can be studied using the variance stabilization Theorem~\ref{thm:vars}. The argument would take the form of: with high probability due to variance stabilization, there are several model instances that are not identical and therefore, can be both the pre-image candidates or the instances returned by the client, interchangeably. The higher the number of plausible pre-image candidates among the model instances, the less certainty the adversary has about the update, and therefore, the smaller the privacy loss. We plan to formalize this intuition in future work.

\subsection{Privacy against Opportunistic Threat (Adversary~III)}\label{sec:privacy-opportunistic}
Finally, we analyze the privacy guarantees DDML provides against adversary III -- the one that is able to inspect a random model instance out of the $k$ models after a user of interest to the adversary has submitted their model instance and an expected $T$ updates to that model instance have occurred since. Note that in practice, the adversary may have an estimate of $T$, but not know it precisely, as it is difficult to measure how many updates have occurred to a model instance in a distributed system serving millions of clients such as DDML.

The privacy amplification against this adversary will come from two sources -- from the ``discard" step, in that it contributes to the possibility that the model the user contributed to is discarded in the long-term and from the accumulation of the noise, in that with each model update, additional random noise is added to it, which contributes to further privacy protection for the user whose update has occurred in preceding steps. The analysis of the privacy amplification due to the ``discard" step is presented in Lemma~\ref{lem:discard}; the analysis due to noise accumulation -- in Lemma~\ref{lem:acc-l}. 

Specifically, we find that as $T$ becomes sufficiently large, with probability $\approx 1-\frac{1}{k}$, each particular contribution has perfect privacy, and with probability $\frac{1}{k}$ the privacy guarantee is amplified by $\sqrt{T}$, resulting in the overall $k \sqrt{T}$ expected privacy amplification against adversary III compared to adversary I.

\begin{lemma}\label{lem:discard}
DDML discards $1-\frac{1}{k}$ updates long-term.
\end{lemma}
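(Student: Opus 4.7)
The plan is to model DDML's server-side updates as a Markov chain tracking how a single user's contribution propagates through or is purged from the $k$ model slots. Fix an update $B^{*}$ that is written to a random slot $s^{*}$ at time $0$, and let $I_t \subseteq \{1,\ldots,k\}$ denote the set of slots whose current contents still depend on $B^{*}$, with $n_t = |I_t|$ and $n_0 = 1$. At each subsequent step the server draws a read slot $R$ (the pre-image) and a write slot $W$, independently and uniformly from $\{1,\ldots,k\}$; the new value placed into $W$ is a function of the old value at $R$ plus client-side noise and gradient, so slot $W$ belongs to $I_{t+1}$ iff $R \in I_t$, while all other slots keep their status.

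Enumerating the four cases for $(R \in I_t?,\ W \in I_t?)$ shows that $n_t$ itself is Markovian with transition probabilities
\begin{equation*}
P(n \to n+1) = P(n \to n-1) = \frac{n(k-n)}{k^{2}}, \quad P(n \to n) = \frac{n^{2} + (k-n)^{2}}{k^{2}}.
\end{equation*}
Two features fall out immediately: the walk is \emph{symmetric} (the up and down rates coincide at every interior state), and both $n=0$ and $n=k$ are absorbing, since once no slot depends on $B^{*}$ no future read can reintroduce its influence, and once every slot depends on $B^{*}$ every future pre-image does too.

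From here the claim reduces to classical gambler's-ruin for a symmetric walk with absorbing barriers at $0$ and $k$: starting from interior state $n$, the probability of eventual absorption at $0$ is $(k-n)/k$. Plugging in $n_0 = 1$ gives absorption probability $(k-1)/k = 1 - 1/k$, meaning that with this probability every slot eventually becomes independent of $B^{*}$, i.e., the user's update is discarded long-term (and with the complementary probability $1/k$ the influence spreads to all $k$ slots, which is precisely the case that Lemma~\ref{lem:acc-l} will need to handle via noise accumulation). The step I expect to require the most care is justifying the Markov reduction itself, namely that ``depends on $B^{*}$'' is truly a binary, genealogy-transmitted property that is not muddied by the noise and gradients contributed by other users in between. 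Formalizing the dependence at the level of supports (or equivalently the linear span of private inputs in the GLM setting) makes this precise, after which the symmetric walk analysis and the gambler's-ruin conclusion are entirely standard.
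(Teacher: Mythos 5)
Your proposal is correct and follows essentially the same route as the paper: both model the number of model instances carrying the user's contribution as a lazy symmetric random walk on $\{0,\ldots,k\}$ with transition probability $\frac{i(k-i)}{k^2}$ to each neighbor, absorbing barriers at $0$ and $k$, and starting state $1$, concluding absorption at $0$ with probability $1-\frac{1}{k}$. The only cosmetic difference is that you invoke the classical gambler's-ruin formula while the paper re-derives it by summing the harmonicity equations $2q_i = q_{i-1}+q_{i+1}$; your added care about why ``depends on $B^{*}$'' is a genealogy-transmitted binary property is a reasonable point the paper leaves implicit.
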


\begin{proof}
Consider a Markov process on the states $0, 1, \ldots, k$, where each state represents the number of models in which a particular update can be found. Denote by $p_i$ - the probability to go from $i$ to $i-1$ or $i+1$. In DDML, $p_i = \frac{(k-i)i}{k^2}$, for $1 \leq i \leq k-1$, and $p_0 = 0, p_k = 1$.

Let $q_i$ be the probability of eventually ending up at state 0 if you start in state $i$. By the set-up of DDML, for $1 < i<k-1$ we have:\\
$q_i = p_iq_{i-1} + p_iq_{i+1}+(1-2p_i)q_,$ or $2q_i = q_{i-1}+q_{i+1}$.

We also know that 
$q_0=1, p_0=0$, $q_k=0, p_k=0$, \\
$q_1 = p_1q_0+p_1q_2+(1-2p_1)q_1$,\\
$q_{k-1} = p_{k-1}q_{k-2} + p_{k-1}q_k+(1-2p_{k-1})q_{k-1}$ 

Summing equations for $1 \leq i \leq k-1$ we have\\
$2q_1 + \cdots + 2 q_{k-1} = q_0 + q_1 + 2(q_2+\cdots+q_{k-2})+q_{k-1}+q_{k}$
Simplifying:
$q_1+q_{k-1} = q_0+q_k$ or 
\begin{equation}\label{eq:1}
q_1+q_{k-1} = 1
\end{equation}

On the other hand, \\
$q_{k-2} = 2q_{k-1}$,\\
$q_{k-3} = 2q_{k-2} - q_{k-1} = 3q_{k-1}$ \\
$q_{k-4}=2q_{k-3}-q_{k-2} = 4q_{k-1}$\\
$q_{k-5}=2q_{k-4}-q_{k-3}=5q_{k-1}$\\ 
$q_{k-6}=2q_{k-5}-q_{k-4}=6q_{k-1}$\\ 
$\cdots$\\
\begin{equation}\label{eq:2}
q_1 = 2q_2-q_3 = (k-1)q_{k-1}
\end{equation}

Combining~(\ref{eq:1}) and~(\ref{eq:2}), we have $(k-1)q_{k-1}+q_{k-1} = 1$ or $q_{k-1}=\frac{1}{k}$ and $q_1 = 1-\frac{1}{k}$.
Since DDML is set-up that each particular contribution is initially in state 1, this completes the proof.
\end{proof}

%
%
%
%

\begin{lemma}\label{lem:acc-l}
With high probability, DDML guarantees a user $(\epsilon_T, \delta_T)$-differential privacy against adversary III, 
where\\ $\epsilon_T = \frac{\epsilon}{\sqrt{2T}} \sqrt{\ln\left(\frac{1}{2\delta_T}\right)} $ and $\delta_T$ is an arbitrary small constant (typically chosen as O(1/ number of users)). $T$ is the number of updates made to the model instance between when a user submitted his instance update and when the adversary observes the instances. The statement holds if $T$ is sufficiently large.
\end{lemma}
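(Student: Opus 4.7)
The plan is to quantify how much independent noise accumulates on top of the user's contribution between when they submit an update and when adversary~III inspects a single model instance. First I would invoke Lemma~\ref{lem:discard}: with long-run probability $1-\tfrac{1}{k}$ the user's contribution has been discarded from every instance, in which case the adversary sees nothing of the user and perfect privacy holds. It therefore suffices to bound the privacy loss conditional on the observed instance still carrying the user's update, which is the event the statement refers to via ``with high probability'' in combination with the Gaussian-approximation event described below.

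Conditioning on that event, each of the $T$ subsequent updates to the same instance adds an independent $L(0,\,2\gamma/\epsilon)$ draw; the gradient terms added by other users in those updates are functions of other users' data alone and thus act as a fixed, data-independent post-processing for our user. Hence, from the adversary's point of view, the user's clipped gradient is released corrupted by a sum $S_T = \sum_{t=1}^T \eta_t$ of $T$ i.i.d.\ Laplace variables of scale $2\gamma/\epsilon$, with $\mathrm{Var}(S_T) = 8T\gamma^2/\epsilon^2$. For $T$ sufficiently large, a Berry--Esseen-type bound (or, alternatively, a direct Chernoff argument using the explicit Laplace moment generating function) shows that $S_T$ is close in total variation to $\mathcal{N}(0,\,8T\gamma^2/\epsilon^2)$ with sub-Gaussian tails. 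Folding both the approximation error and the usual Gaussian-mechanism tail event into the slack parameter $\delta_T$, I would then apply the standard Gaussian-mechanism calibration with sensitivity $\Delta f = 2\gamma$ (inherited from the gradient clipping in Algorithm~\ref{alg1}) and effective noise scale $\sigma_{\mathrm{eff}} = \sqrt{8T}\,\gamma/\epsilon$, yielding
\[
\epsilon_T \;=\; \frac{\Delta f}{\sigma_{\mathrm{eff}}}\sqrt{\ln\!\left(\tfrac{1}{2\delta_T}\right)} \;=\; \frac{\epsilon}{\sqrt{2T}}\sqrt{\ln\!\left(\tfrac{1}{2\delta_T}\right)},
\]
which is exactly the bound in the statement.

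The main obstacle is upgrading this qualitative CLT picture into a quantitative $(\epsilon_T,\delta_T)$-DP guarantee. Differential privacy is a statement about pointwise density ratios at two shifted arguments, and small pointwise errors in approximating the density of a Laplace sum by a Gaussian density can translate into disproportionately large \emph{multiplicative} errors in those ratios. The cleanest route is a truncation-plus-Chernoff argument: replace $S_T$ by its restriction to a symmetric interval of width $\Theta\!\bigl(\sqrt{T\ln(1/\delta_T)}\,\gamma/\epsilon\bigr)$, use the explicit Laplace MGF to show that the excluded tail has probability at most $\delta_T$, and then bound the density ratio of the truncated sum by expanding its characteristic function (or by a direct convolution estimate) to sufficient order in $1/\sqrt{T}$. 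Once this concentration step is in hand, all other pieces--the reduction to the non-discarded event, the variance computation for $S_T$, and substitution into the Gaussian-mechanism formula--are routine.
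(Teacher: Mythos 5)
Your overall route is the same as the paper's: reduce the adversary's view to the user's clipped gradient corrupted by the sum of $T$ i.i.d.\ Laplace draws of scale $\frac{2\gamma}{\epsilon}$, approximate that sum by a Gaussian of variance $\frac{8T\gamma^2}{\epsilon^2}$ via the CLT, and calibrate the Gaussian mechanism to obtain $\epsilon_T$ (the paper uses Lemma 1 of~\cite{jl}, solves the resulting quadratic in $\epsilon_T$, and drops the lower-order term for large $T$; your simplified calibration with sensitivity $2\gamma$ lands on the same expression). You also correctly identify the step the paper itself leaves informal: convergence in distribution does not by itself control the pointwise density ratios that differential privacy requires, and your truncation-plus-Chernoff plan is a sensible way to make that quantitative --- more than the paper actually does. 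Folding the discard event of Lemma~\ref{lem:discard} into the ``with high probability'' clause is also consistent with how the surrounding text combines the two lemmas.

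The genuine gap is your claim that ``the gradient terms added by other users in those updates are functions of other users' data alone and thus act as a fixed, data-independent post-processing.'' This is false: each subsequent client computes its gradient on the model instance it receives, and that instance already encodes the target user's contribution, so the later gradient steps are \emph{adaptive} functions of the very quantity you are trying to protect, not data-independent shifts. Without addressing this, the reduction to ``user's update plus an independent Laplace sum'' does not go through. The paper handles exactly this point by invoking the result of~\cite{Birs} that for any \emph{contractive} noisy iteration the privacy amplification is no worse than for the identity contraction, and only then analyzes pure noise accumulation; the convexity of the GLM loss is what makes the intermediate SGD maps non-expansive so that they cannot undo the accumulated noise. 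You need either that citation or an equivalent contractivity argument to justify your starting point.
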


\begin{proof}[Proof of Lemma~\ref{lem:acc-l}]
We rely on the result from concurrent and independent work of~\cite{Birs} obtained in a different context to analyze the privacy amplification in this case. Specifically, their result states that for \emph{any} contractive noisy process, privacy amplification is no worse than that for the \emph{identity} contraction, which we analyze below.

The sum of $T$ random variables drawn independently from the Laplace distribution with mean $0$ will tend toward a normal distribution for sufficiently large $T$, by the Central Limit Theorem. In DDML's case with Laplace noise, the variance of each random variable is $\frac{8\gamma^2}{\epsilon^2}$, therefore, if we assume that the adversary observes the model instance after $T$ updates to it, the variance of the noise added will be $T \cdot \frac{8\gamma^2}{\epsilon^2}$. This corresponds to Gaussian with scale $\sigma = \frac{2\sqrt{2T}\gamma}{\epsilon}$.

Lemma 1 from~\cite{jl} states that for points in $p$-dimensional space that differ by at most $w$ in $\ell_2$, addition of noise drawn from $N^p(0, \sigma^2_T)$, where $\sigma_T \geq w \frac{\sqrt{2\left(\ln\left(\frac{1}{2\delta_T}\right)+\epsilon_T\right)}}{\epsilon_T}$ and $\delta_T < \frac{1}{2}$ ensures $(\epsilon_T, \delta_T)$ differential privacy. We use the result of Lemma 1 from~\cite{jl}, rather than the more commonly referenced result from Theorem A.1 of~\cite{dwork2014algorithmic}, because the latter result holds only for $\epsilon_T \leq 1$, which is not the privacy loss used in most practical applications.

We now ask the question: what approximate differential privacy guarantee is achieved by DDML against adversary III? To answer this, fix a desired level of $\delta_T$ and use the approximation obtained from the Central Limit theorem to solve for the $\epsilon_T$.

\[\frac{2\sqrt{2T}\gamma}{\epsilon} \geq w \frac{\sqrt{2\left(\ln\left(\frac{1}{2\delta_T}\right)+\epsilon_T\right)}}{\epsilon_T}\]
\[T \cdot \frac{8\gamma^2}{\epsilon^2} \geq w^2 \frac{2\left(\ln\left(\frac{1}{2\delta_T}\right)+\epsilon_T\right)}{\epsilon^2_T}\]

\[T \cdot \frac{4\gamma^2}{\epsilon^2} \cdot \epsilon^2_T - w^2 \epsilon_T - w^2 \ln\left(\frac{1}{2\delta_T}\right) \geq 0\]

Solving the quadratic inequality, we have:
\[D = w^4+4T \cdot \frac{4\gamma^2}{\epsilon^2} w^2 \ln\left(\frac{1}{2\delta_T}\right)\]

\[\epsilon_T \geq \frac{w^2 + \sqrt{D}}{2T \cdot \frac{4\gamma^2}{\epsilon^2} } = \frac{\epsilon^2 w^2}{8\gamma^2T} \sqrt{1+16T\frac{\gamma^2}{\epsilon^2 w^2} \ln\left(\frac{1}{2\delta_T}\right)}\]

For large $T$, the additive term of $1$ under the square root is negligible, so we have:
\[\epsilon_T \approx \frac{\epsilon^2 w^2}{8\gamma^2T}  4\sqrt{T} \frac{\gamma}{\epsilon w} \sqrt{\ln\left(\frac{1}{2\delta_T}\right)} =  \frac{\epsilon w}{2\gamma \sqrt{T}} \sqrt{\ln\left(\frac{1}{2\delta_T}\right)}\]

In DDML, $w=\sqrt{2}\gamma$, therefore, 

$$
\epsilon_T \approx \frac{\epsilon}{\sqrt{2T}} \sqrt{\ln\left(\frac{1}{2\delta_T}\right)}
$$
\end{proof}

Consider a choice of $\delta_T = 10^{-8}$. Then Lemma~\ref{lem:acc-l} states that $\epsilon_T \approx \frac{3\epsilon}{\sqrt{T}}$. In other words, if DDML guarantees $\epsilon$ pure differential privacy in the local model against an adversary who can observe the channel communication between the client and the server, then it provides a $(\frac{3}{\sqrt{T}} \epsilon, 10^{-8})$-approximate differential privacy guarantee against the weaker adversary who can observe the models after $T$ updates, with high probability.  Although pure and approximate differential privacy are not directly comparable, one interpretation of this result is that the privacy loss guarantee against the more realistic and more common adversary III improves inversely proportional to the square root of the frequency with which that adversary can observe the model instances or, equivalently, the speed with which the adversary can ensure that he obtains the models after he knows that the target user of interest has sent the server an instance updated with their data. Even though the inverse dependence of privacy loss against adversary III on $\sqrt{T}$ is only approximate (see~\cite{Birs} for a precise analysis), it is noteworthy, as in practical applications it may allow choosing a higher $\epsilon$, and thus improve the utility and convergence speed of the learning framework. Indeed, adversary I is extremely unlikely and/or requires significant effort to implement; therefore, it may be acceptable to have a higher privacy loss against it, while simultaneously maintaining a sensible privacy loss against the more realistic and less resource intensive adversary III.

\section{Experiments and Results}\label{sec:experiments}
\begin{figure*}[!t]
\centering
\includegraphics[width= .9\linewidth]{./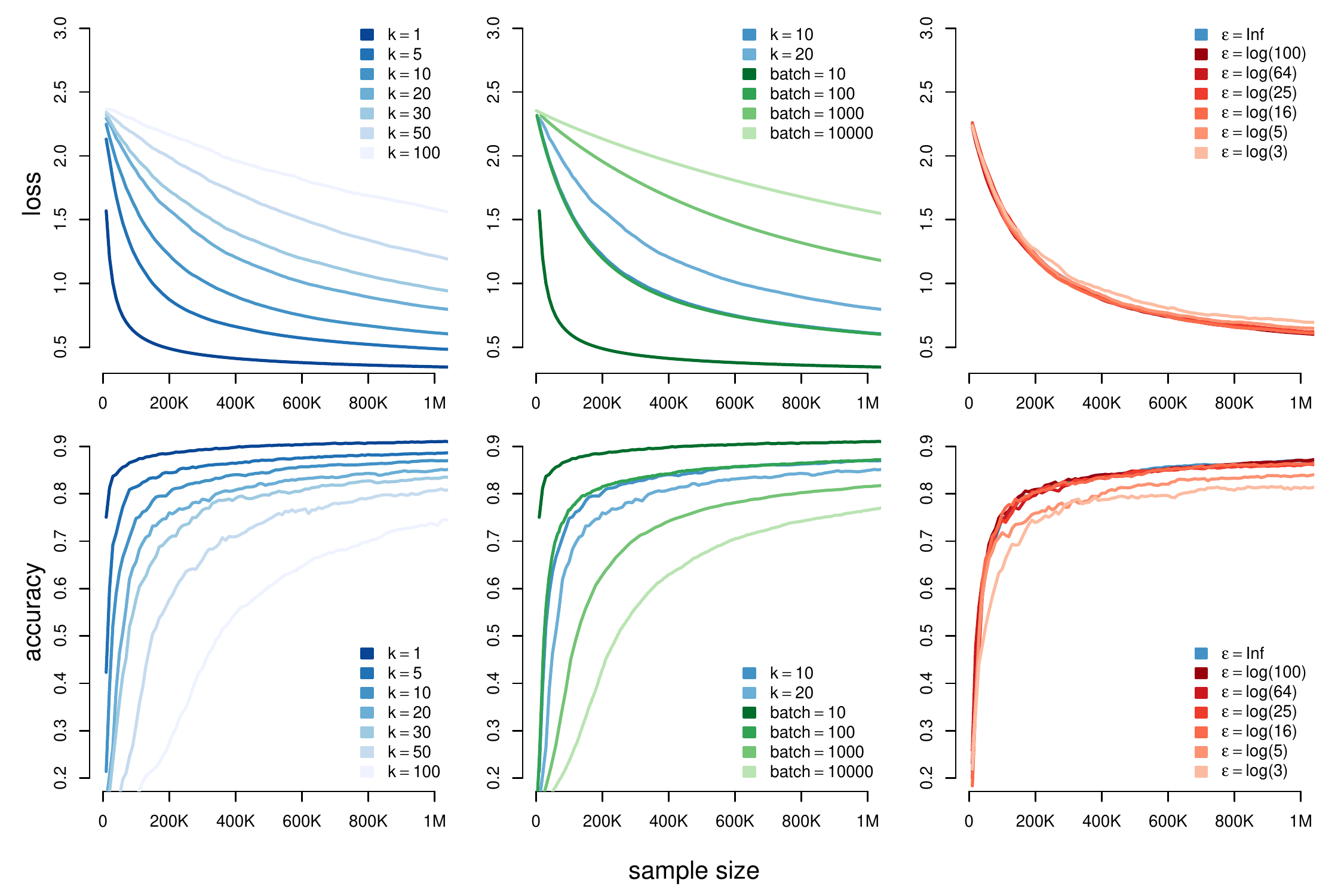}
\caption{Performance of DDML for different $k$ (1st column) and $\epsilon$ (3rd column). We also compare server-side batching and $k$ model instances (2nd column). Learning rate is $\gamma=0.001$.}
\label{exp1}
\end{figure*}

We study performance of the DDML framework using a multi-class logistic regression. We evaluate the impact of different choices of $k$ (the number of instances) and different levels of desired privacy budget, $\epsilon$, on both model loss (training set) and accuracy (hold-out set). In addition, we compare convergence properties of ``Draw and Discard'' with the server-side batching approach. By server-side batching, we mean a server model in which updates are streamed into a temporary storage (on a server), accumulated, and then averaged out to make a single model update.

We also evaluate DDML ``Draw and Discard'' strategy relative to alternative update schemes, which include sampling updates with probability $\frac{1}{k}$, always replacing the same instance and averaging models before overwriting them.

Finally, we demonstrate the effectiveness of DDML on training a small logistic regression with known weights in our production, real-world, distributed environment with millions of daily users.

\subsection{Experiment Configurations}
We conduct our study on MNIST digit recognition~\cite{lecun1998gradient} challenge and use multi-class logistic regression with Cross-Entropy loss as our model across different DDML configurations. The MNIST dataset has a training set of 60,000 examples, and a test set of 10,000 examples. Each 28x28 image is serialized to a 784 vector that serves as features for predicting 0-9 hand-written digits.

We set our learning rate $\gamma = 0.001$ for all experiments and standardize all features to $[0, 1]$. We simulate ``clients'' by randomly assigning $N_c = 10$ examples to each one, resulting in 6,000 ``clients''. We also make 20 passes over training data for all configurations. Because the number of model updates differs for different experiment configurations, we standardize our experiments by looking at the sample size, i.e., the number of data points ingested by the algorithm up to that point. Results of our experiments are visualized in Figure~\ref{exp1}. For experiments where $k > 1$, we initialize them using normal distribution with means $0$ and variance $\frac{k}{2}\sigma^2$, where $\sigma^2$ is the client-side noise variance. For cases when we do not add noise on the client side, we initialize with $\epsilon=1$ in the client-side noise calculation. 

\textbf{Comparing $k$s: } Because the number of model instances $k$ is so central to the DDML framework, its impact on model performance must be carefully evaluated. When studying the effect of $k$, we did not add any Laplace noise on the client side to eliminate additional sources of entropy. Loss on a training set for different $k$s from 0 to 100 are shown in the first panel of Figure~\ref{exp1}. Accuracy results on the test set using the averaged model over the $k$ instances is shown in the first panel of the second row. The $k=1$ configuration is equivalent to a standard server-side model training (the darkest blue line) and clearly has the best performance. As we add more model instances, the number of updates to each model instance decreases (it's equal to $M/k$ in expectation), and averaging over $k$ instances, though beneficial, is not sufficient to make up for the difference. Of course, as $M$, the sample size, goes to infinity, all configurations converge and have equivalent performance metrics\footnote{We do not demonstrate this in practice. It follows from theoretical optimization results on convex functions.}. In practice, we have been using $k \in [20, 30]$ model instances, which is sufficient for scalability and does not incur a large hit in terms of short- to medium-term performance.

\textbf{Server-side batching:} A commonly used solution to server scalability problems is to batch 1,000s or 10,000s of gradients returned by clients on the server. In addition to being inferior in terms of privacy because the batch size is usually orders of magnitude larger than our $k$, we empirically demonstrate that this approach is also inferior in terms of finite-sample performance whenever server-side batch size $N_s > k^2$. The second column in Figure~\ref{exp1} compares $k=10$ and $k=20$ with four batch sizes, $N_s$, of 10, 100, 1,000, 10,000. It's remarkable how overlapping $k=10$ and $N_s = 100$ curves are (it holds empirically for other $k = \sqrt{N_s}$ combinations). Because the learning process must pause to average out gradients, $N_s$ is usually much higher than $k$ to accommodate the high-throughput traffic, which can be easily handled with just $k=20$ model instances and no interruptions to the learning process. There is an argument to be made here that one would use a larger learning rate for the server-side batching strategy, since the gradients are estimated more accurately and larger updates would be warranted. However, it is outside of the scope of this work to study the feasibility of this claim in very general terms. If the learning rate $\gamma$ is selected ``optimally" for the DDML strategy, it is hard to imagine that one could \emph{always} make it much larger without getting into oscillation problems.

\textbf{Privacy parameter $\epsilon$: } For this set of experiments, we fix $k=10$ and vary the amount of noise added on the client. Results are shown in the last column of Figure \ref{exp1}, with the blue line indicating model performance without the noise (same as $k=10$ in the first panel). We observe that for $\epsilon > \log(16)$ there is no substantial negative impact of providing client-side privacy on the model's performance, while smaller privacy parameter choice values such as $\epsilon=\log(3)$ do have some negative impact.

\subsection{Comparing to Alternative Server Architectures}
``Draw and Discard`` architecture may be criticized for being too wasteful with data in terms of permanently losing too many updates due to its overwriting strategy. As is clear from Lemma~\ref{lem:discard}, DDML keeps \emph{only} $\frac{1}{k}$ model updates; however, in exchange, as we will explain below, it receives scalability and generous privacy benefits, which is ultimately the main reason of going down the path of training in a distributed manner. It is, however, critical to understand the reasons behind the short- to medium-term drop in model performance relatively to server-side training. As we demonstrate in the experiments below, it does not come from ``losing`' data, but from partitioning the training data into $k$ groups, ultimately resulting in $\frac{M}{k}$ updates to each of them. 

We compared the DDML server-side architecture (``Draw and Discard'') to several alternatives. Here, we use $k=20$, $\gamma = 0.001$ and no Laplace noise to make comparisons easier to interpret and comparable to Figure \ref{exp1}. First, in terms of data utilization rate, DDML with random replacement of instances should be approximately equivalent to accepting 1 in $k$ model updates (``$\frac{1}{k}$ Update Accept Rate''), i.e., if $k = 20$, we would reject 19 model updates and accept 1, on average. We confirm in Figure~\ref{fig:strategies} that, indeed, this is the case (green and pink lines essentially overlap and have very similar accuracy rates on the MNIST dataset). 

It is very interesting to note that the strategy of sampling a model instance at random and always overwriting exactly the same one after an update (``Same Instance Replace'') is essentially equivalent to both the ``Draw and Discard'' and ``$\frac{1}{k}$ Update Accept Rate'' strategies as can be seen in Figure \ref{fig:strategies}. In this case, we do not discard any model updates, yet the short-term performance does not improve. The reason for that is that all $k$ model instances received $\frac{M}{k}$ updates in expectation, and ultimately arrived to similar locations in terms of objective function optimization. Averaging helps to stabilize the estimates, but cannot possibly move the system closer to the global minimum.

We also tried the averaging strategy where before overwriting the model, we averaged weights between the updated model and the model that is about to be overwritten. In a completely sequential update case, this results in \emph{decreased} performance for both ``Draw and Discard" and ``Same Instance Replace" as can be seen in Figure \ref{fig:strategies}. This surprising result can be easily explained by the fact that by taking an average, we effectively decrease the learning rate (in the case of ``Same Instance Replace" strategy, we make it exactly half) which results in inferior performance if the learning rate was not too large in the first place.

Given that the current ``Draw and Discard'' strategy performs as well or better as any other strategy considered, the relevant considerations when deciding which to choose are privacy, ease of implementation, latency, and robustness to spam and abuse. We observe that, against attacker II, DDML offers a factor of $2 \frac{k}{k-1}$ privacy amplification to all users, whereas ``$\frac{1}{k}$ Update Accept Rate" offers perfect privacy to $(1-\frac{1}{k})$-fraction of updates and no privacy amplification for $\frac{1}{k}$ fraction of updates; thereby, arguably making DDML superior. DDML is trivial to implement in a distributed cloud architecture such as Google Cloud Platform. A one-model approach is much harder to implement and maintain, as one has to deal with locking, situations when the client does not return a model, etc. Having multiple versions of the same model allows for private spam and abuse detection without the need to know the pre-image.


\begin{figure}[!t]
\centering
\includegraphics[width= .9\linewidth]{./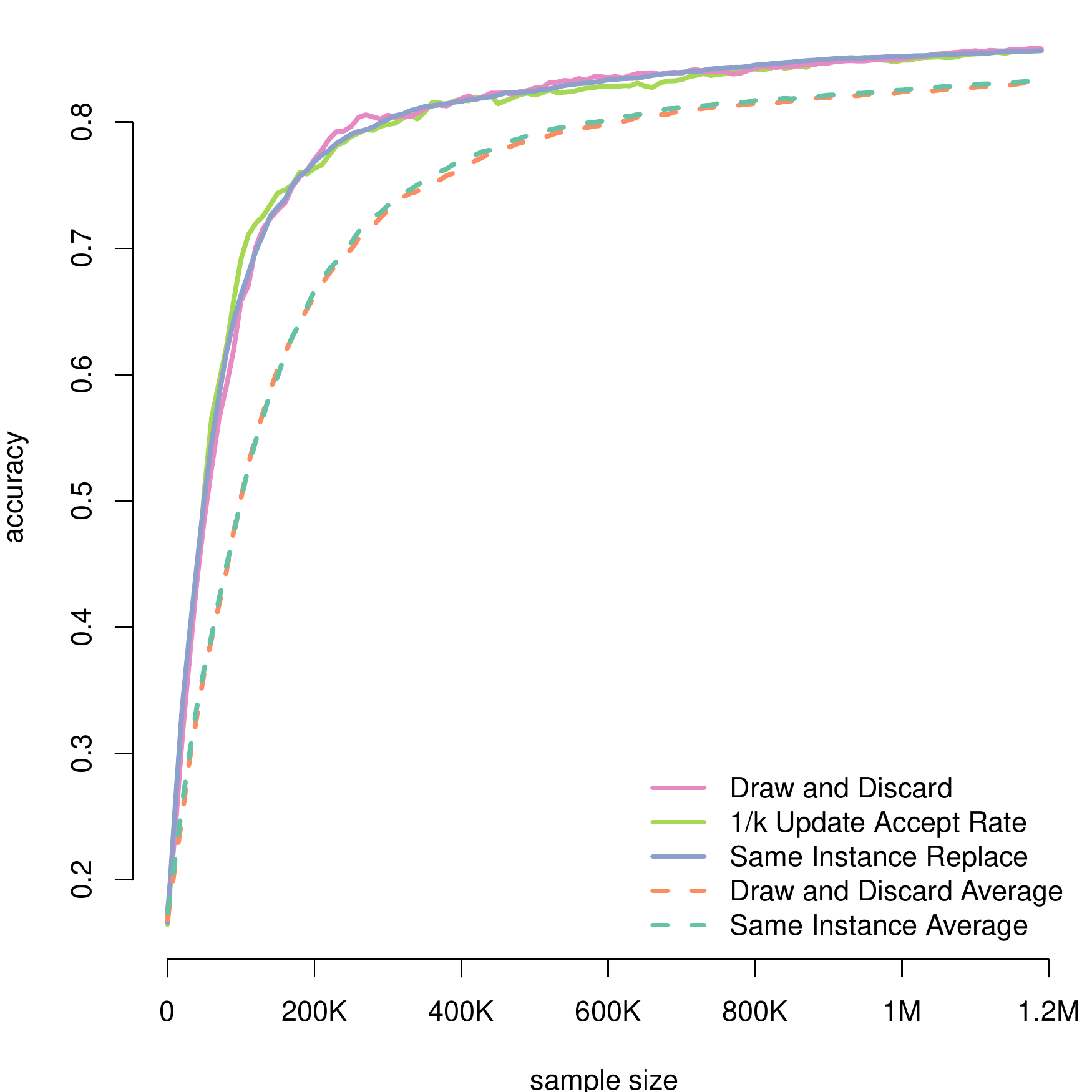}
\caption{The proposed DDML update strategy ``Draw and Discard" performs as well or better in terms of accuracy than several alternative update strategies we considered. In this case, we use $k = 20$ and learning rate $\gamma = 0.001$. In particular, it is interesting to note that it is essentially equivalent to replacing the same instance after an update and superior to averaging strategies.}
\label{fig:strategies}
\end{figure}

\begin{figure*}[!ht]
\centering
\includegraphics[width= .9\linewidth]{./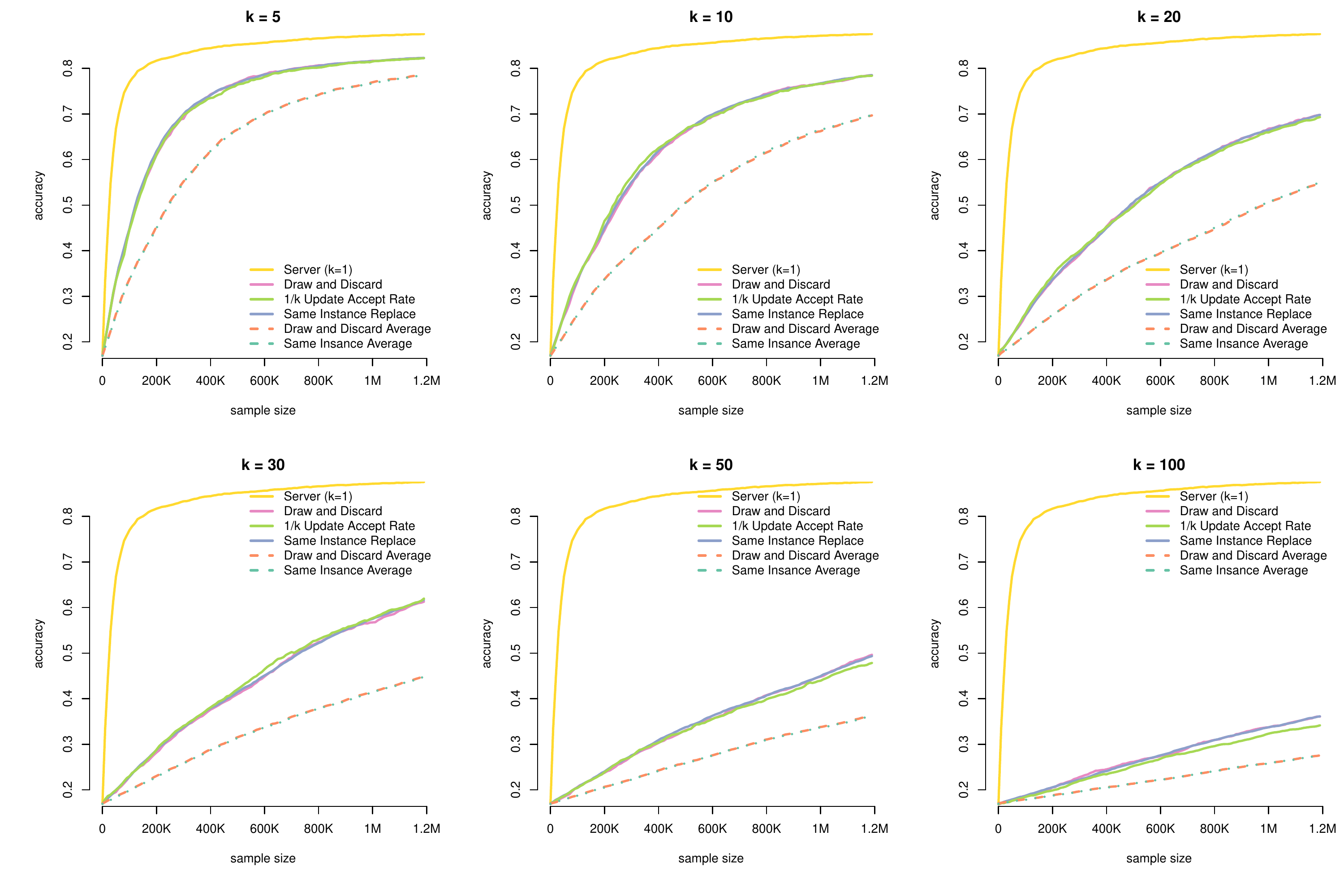}
\caption{We compare several update strategies, varying the number of model instances $k$. Again, we observe results analogous to Figure \ref{fig:strategies}. Here, learning rate $\gamma = 0.0001$ which is 10 times smaller than the one we used in Figure \ref{exp1}.}
\label{fig:smallrate}
\end{figure*}

We ran the above strategies for a whole range of $k$'s and also decreased the learning rate by a factor of 10 compared to simulations in Figure \ref{exp1}. Results are shown in Figure \ref{fig:smallrate}. We would like to make two observations. First, relative performance of different strategies remains similar regardless of $k$. Second, decreasing the learning rate results in substantially worse performance for the ``Draw and Discard'' strategy relative to the server-side one, especially as $k$ becomes larger. In fact, we are observing that learning rate has a larger effect on performance than the number of model instances $k$. Given how hard it is to select an appropriate learning rate in practice, especially in a distributed system, and situations when the choice is off by a factor of 10 from the optimal one not uncommon, any loss of efficiency due to having $k$ instances could be insignificant in comparison.

\subsection{Logistic Regression With Known Weights Trained On Distributed Devices}

We have trained a simple logistic regression with 33 weights using DDML on a fully functional DDML framework in our production environment with millions of daily users. The model consists of 5 distinct predictors $B$ (boolean), $E_1$ (categorical: 3 categories), $E_2$ (categorical: 4 categories), $D$ (double) and $I$ (integer) and all possible two-way crosses between them. The exact model specification is shown below. 
\small
\begin{eqnarray*}
\mathtt{logit}(P(O)) &=& 0.34 -1.18B + 1.05E_{11} + 1.6E_{12} \\
		 &-& 1.51E_{21} + 0.72E_{22} + 1.36E_{23} \\
         &-& 1.41 D -1.67I \\
         &-& 1.14BE_{11} + 1.3BE_{12} \\
         &+& 1.72BE_{21} + 0.18BE_{22} -1.52BE_{23} \\
         &+& 1.72BD -1.17BI \\
         &-& 0.12E_{11}E_{21} -0.53E_{11}E_{22} + 1.00E_{11}E_{23}\\
         &-& 0.42E_{12}E_{21} -1.64E_{12}E_{22} -1.53E_{12}E_{23} \\
         &+& 0.64DE_{11} + 0.48DE_{12} \\
         &-& 0.17IE_{11} + 0.44IE_{12} \\
         &-& 0.51BE_{21} -0.68BE_{22} + 1.65BE_{23} \\
         &+& 0.51IE_{21} -1.64IE_{22} + 0.57IE_{23} \\
         &-& 0.80DI
\end{eqnarray*}
\normalsize

Given our selection of parameters and features, the true probability of outcome (O) for this model is about 46\%. On a client side, we store six fields in a local SQL database. The model specification, in terms of which features to use and which crosses to include, is controlled completely by the server, which allows to train different models using the same underlying data without making client-side changes. An example of several rows of data for this model is shown in the table below.

\begin{center}
\begin{tabular}{ c c c c c c }
O & B & $E_1$ & $E_2$ & D & I \\
\hline
1 & 1 & "A" & "c" & 2.18 & 3 \\
0 & 1 & "B" & "a" & 0.89 & 4 \\
1 & 0 & "C" & "b" & 0.23 & 0 \\
1 & 0 & "C" & "d" & 1.30 & 1 \\
\ldots & \ldots & \ldots & \ldots & \ldots & \ldots 
\end{tabular}
\end{center}

\begin{figure}[!ht]
\centering
\includegraphics[width= .9\linewidth]{./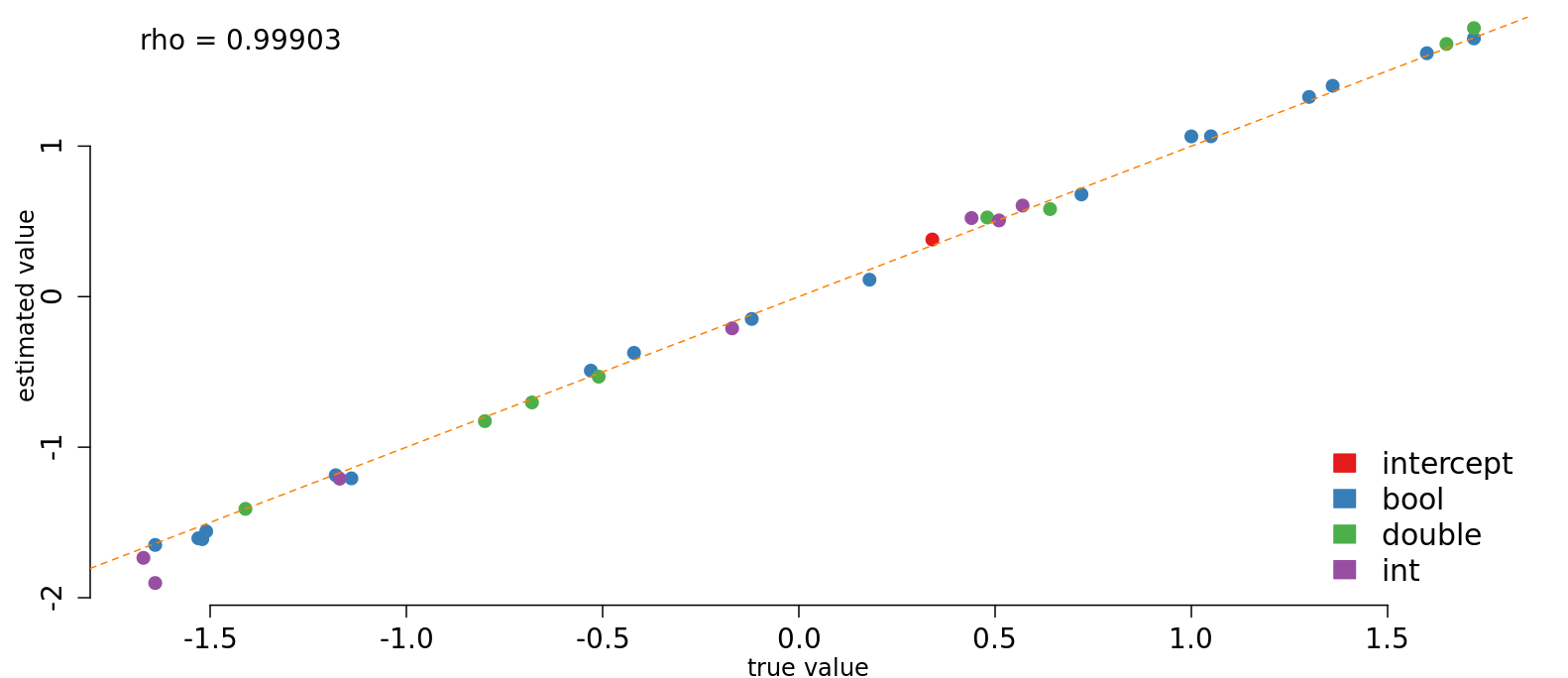}
\caption{Comparing true and estimated weights for the logistic model with known parameters (weights).}
\label{fig:rand2}
\end{figure}

DDML parameters for this model are $k=30$ and learning rate $\gamma=0.005$. We did not add noise to the update step in this case to eliminate an additional source of uncertainty, as this was the first model we trained using the framework. Up to this point, we have received 983,068,969 model updates and estimated weights along with the true ones are plotted in Figure \ref{fig:rand2}. Estimates are very close to the true parameter values. The accuracy of the model is 80\%, with precision of 79\% and recall of 72\%. 
                
We once attempted to train this model with $k=3$ model instances which resulted in hot keys in Google memcache \cite{memcache}. We received a prompt ticket from Google SREs to \emph{immediately} fix this as it was causing broader in scope (outside of our project) issues.

\section{Real World Applications}\label{sec:applications}
\subsection{Predicting Phishy URLs in Chat}
One of the most common abuse vectors for our platform is phishing user credentials. When malicious accounts manage to befriend legitimate ones or some accounts get hacked, they begin to send phishing url's to users within their social circle. Even if only a small fraction of users take the bait, the cycle of abuse is perpetuated indefinitely. We rely on Google Safe Browsing \cite{safebrowsing} and internally maintained lists of known phishy URLs to prevent these kinds of attacks. However, the process of discovering \emph{novel} malicious URLs depends on customer ops reports which is slow, manual and opportunistic.

We implemented a DDML logistic regression model for predicting the likelihood of a URL being phishy. Our features come from both the URL itself (different parts of the ULR, such as domain, subdomain, host, path, query parameters), as well as the page content. We consider features related to character distribution, special characters ($?,\&,\ldots$), lengths, language, particular keywords, etc. For content page features, we include features related to major page component, such as iframes, input boxes, password boxes, images, scripts, page size, readability and many others. 

For the model that we currently use in production, we received 1,730,624,961 model updates. The number of weights is 387. We use $k=20$,  $\epsilon = \log(32)$ and learning rate $\gamma=0.001$. 

Performance of our model is shown in the table below. 
\begin{center}
\begin{tabular}{ l l }
ROC AUC	& 0.9702 \\
PR AUC	& 0.7851 \\
Accuracy &	0.9263 \\
Precision & 0.1577 \\
Recall & 0.8922 \\
F1 & 	0.2681 
\end{tabular}
\end{center}

Overall, our recall is very good, while precision remains relatively low because of the inherent complexity of the adversarial setting in which attackers constantly change their approaches making it hard to detect new types of URLs that have not been previously seen. Our online model allowed us to increase discovery of new phishing urls by an orders of magnitude since we launched it. We are exploring the addition of features related to the way pages look, which is getting closer to image analysis and, potentially, non-linear models in the future.

\subsection{Ranking Application Based on User Text Input}
Currently, we also have an application in production ranking about 7,000 items, tagged with string qualifiers, based on user text input. The model consists of uni-grams, bi-grams, tri-grams and their selected crosses between user input and tags (~50,000 weights and 1MB in size). We have successfully deployed smaller versions of this model globally and saw an improvement in user engagement in single percentage points. The hope is that, in the future, we will obtain even higher gains with models of larger capacity.

\section{Discussion}\label{sec:discussion}
Having deployed DDML at scale at a large company with millions of daily active clients, we realized how critical a well-designed server-side architecture was to the client-side learning process. Due to the symmetric nature of draws and discards, with the number of reads equaling the number of writes, there must be sufficient redundancy in place to scale our serving infrastructure. $k$ model instances offer, besides increased privacy, an incredibly scalable and asynchronous solution to client-server communication. 

One can easily make an argument that replacing model instances at random is ``wasteful" from the model training perspective. It is partially true. However, so is setting the wrong learning rate, mismanaging the server-side batch size, etc. We are never perfect in utilizing our data in the absolute sense even before moving to a distributed ML setting. There, things only become more complicated from a learning perspective and it is not unreasonable to see additional performance sacrifices. If your architecture can support only 10 writes per second, for example, and your overall traffic is 100 write requests per second, you will not be able to perfectly utilize all your available data in a sequential updating scheme. Trade-offs must be made. The focus should be not on what we are losing because we must lose something, but what we are gaining in exchange. By making a small sacrifice in performance by introducing $k$ instances, we gain scalability, ease of implementation, spam detection, and additional privacy. The only question is whether we are trading off these properties efficiently.

DDML framework can easily be extended to support neural networks and any other models whose objective function can be written as a sum of differentiable functions. Very recent work of~\cite{masters2018revisiting} may be useful in providing guidance for parameter tuning in those cases. Extending it to decision trees seems harder and further research into distributed optimization of trees is needed.


A major struggle in the application of DP to practice~\cite{wired2017, Tang} is the question of how to keep the overall per-user privacy loss~\cite{nissim2017differential, Registry} bounded over time. This is particularly challenging in the local model as more data points are needed to achieve the comparable level of utility in the local model than in the trusted curator model \cite{kairouz2014extremal}. Our approach mimics the one taken by Apple's deployment of DP~\cite{Apple, Tang}: ensure the privacy loss due to the collection of one piece of data from one user is bounded, but allow multiple collections from the same user over time. Formally, this corresponds to the privacy loss growing as the number of items submitted, as per composition theorems~\cite{kairouz2017composition, dwork2014algorithmic}. 

We offer \emph{feature-level} local differential privacy and, therefore, in a situation when features are correlated, the privacy loss scales with the number of features. In principle, if one would like to achieve model-level privacy, one needs to scale the noise up according to the number of features included in the model.
Applications of differential privacy to very high-dimensional data, particularly, in the local model, have not yet been adopted in practice. In theoretical work, the distinction is often mentioned, but the choice is left to industry practitioners. 
We believe that in practice, feature-level privacy combined with limited server-side model retention is sufficient to protect the privacy of our clients against most realistic adversaries.

With respect to privacy guarantees, the main focus of this work has been to ensure the strongest possible form of privacy -- pure differential privacy in the local model. We have also discussed more realistic adversary models and the way that DDML provides even better privacy guarantees against those. We are optimistic that further improvements, both in utility and in the tightness of privacy analyses are possible via switching from Laplace to Gaussian noise in the DDML client-side Algorithm~\ref{alg1}, further precision of adversary modeling, and in performing the analyses using the variant of Renyi Differential Privacy~\cite{RenyiDP} or Concentrated Differential Privacy~\cite{CDP}. The optimism for the first claim stems from experience with other differentially private applications and the very recent work of~\cite{Birs}; for the second -- from the similarities between DDML and the shuffling strategy of~\cite{Prochlo} and privacy amplification by sampling exploited by~\cite{li2012sampling, amp, Abadi2016}; for the third -- from recent work in differentially private machine learning \cite{Abadi2016, Geumlek2017, mcmahanlanguage, wu2017bolt} that benefits from analyses using such relaxations. We hypothesize that accuracy and privacy of DDML can also be improved using hybrid approaches to privacy that combine locally private data with public data or data obtained with privacy in the non-local model, as proposed by~\cite{Birs} and~\cite{Blender}.

In a sense, DDML can be viewed as a system, whose particular components, such as the approach chosen to ensure local privacy and the analysis under the chosen adversary model can be varied depending on application and the desired nuance of privacy guarantee.

\section{Conclusions}
Client-side privacy-preserving machine learning is still in its infancy and will continue to be an active and important research area both in ML and privacy for the
foreseeable future. We believe that the most important contribution of this work is a completely new server-side architecture with
random ``draw and discards'', that offers unprecedented scalability with no interrupts to the learning process, client-side and server-side
privacy guarantees and, finally, a simple, inexpensive and \emph{practical} approach to client-side machine learning. 

Our focus on simpler, yet useful in practice, linear models allowed us to experiment with client-side ML without having to worry about convergence
and other ML-related issues. Instead, we have sufficient freedom to zoom in on privacy considerations, build simple and scalable infrastructure
and leverage this technology to improve mobile features for millions of our users.

\balance
\bibliography{ddml_bib}

\begin{thebibliography}{10}
\providecommand{\url}[1]{#1}
\csname url@samestyle\endcsname
\providecommand{\newblock}{\relax}
\providecommand{\bibinfo}[2]{#2}
\providecommand{\BIBentrySTDinterwordspacing}{\spaceskip=0pt\relax}
\providecommand{\BIBentryALTinterwordstretchfactor}{4}
\providecommand{\BIBentryALTinterwordspacing}{\spaceskip=\fontdimen2\font plus
\BIBentryALTinterwordstretchfactor\fontdimen3\font minus
  \fontdimen4\font\relax}
\providecommand{\BIBforeignlanguage}[2]{{%
\expandafter\ifx\csname l@#1\endcsname\relax
\typeout{** WARNING: IEEEtranS.bst: No hyphenation pattern has been}%
\typeout{** loaded for the language `#1'. Using the pattern for}%
\typeout{** the default language instead.}%
\else
\language=\csname l@#1\endcsname
\fi
#2}}
\providecommand{\BIBdecl}{\relax}
\BIBdecl

\bibitem{Abadi2016}
\BIBentryALTinterwordspacing
M.~Abadi, A.~Chu, I.~Goodfellow, H.~B. {McMahan}, I.~Mironov, K.~Talwar, and
  L.~Zhang, ``Deep learning with differential privacy,'' in \emph{Proceedings
  of the 2016 ACM SIGSAC Conference on Computer and Communications Security},
  ser. CCS '16, 2016, pp. 308--318. [Online]. Available:
  \url{http://doi.acm.org/10.1145/2976749.2978318}
\BIBentrySTDinterwordspacing

\bibitem{Apple}
\BIBentryALTinterwordspacing
{Apple Differential Privacy Team}, ``Learning with privacy at scale.''\hskip
  1em plus 0.5em minus 0.4em\relax Apple Machine Learning Journal, 2017,
  vol.~1. [Online]. Available:
  \url{https://machinelearning.apple.com/2017/12/06/learning-with-privacy-at-scale.html}
\BIBentrySTDinterwordspacing

\bibitem{Blender}
\BIBentryALTinterwordspacing
B.~Avent, A.~Korolova, D.~Zeber, T.~Hovden, and B.~Livshits, ``{BLENDER}:
  Enabling local search with a hybrid differential privacy model,'' in
  \emph{26th {USENIX} Security Symposium ({USENIX} Security 17)}, 2017, pp.
  747--764. [Online]. Available:
  \url{https://www.usenix.org/conference/usenixsecurity17/technical-sessions/presentation/avent}
\BIBentrySTDinterwordspacing

\bibitem{bassily2017practical}
R.~Bassily, K.~Nissim, U.~Stemmer, and A.~G. Thakurta, ``Practical locally
  private heavy hitters,'' in \emph{Advances in Neural Information Processing
  Systems}, 2017, pp. 2285--2293.

\bibitem{bassily2015local}
R.~Bassily and A.~Smith, ``Local, private, efficient protocols for succinct
  histograms,'' in \emph{Proceedings of the forty-seventh annual ACM Symposium
  on Theory of Computing}.\hskip 1em plus 0.5em minus 0.4em\relax ACM, 2015,
  pp. 127--135.

\bibitem{Prochlo}
A.~Bittau, U.~Erlingsson, P.~Maniatis, I.~Mironov, A.~Raghunathan, D.~Lie,
  M.~Rudominer, U.~Kode, J.~Tinnes, and B.~Seefeld, ``{PROCHLO}: Strong privacy
  for analytics in the crowd,'' in \emph{Proceedings of the 26th Symposium on
  Operating Systems Principles}.\hskip 1em plus 0.5em minus 0.4em\relax ACM,
  2017, pp. 441--459.

\bibitem{bun2017heavy}
M.~Bun, J.~Nelson, and U.~Stemmer, ``Heavy hitters and the structure of local
  privacy,'' \emph{arXiv preprint arXiv:1711.04740}, 2017.

\bibitem{CDP}
M.~Bun and T.~Steinke, ``Concentrated differential privacy: Simplifications,
  extensions, and lower bounds,'' in \emph{Theory of Cryptography
  Conference}.\hskip 1em plus 0.5em minus 0.4em\relax Springer, 2016, pp.
  635--658.

\bibitem{chaudhuri2009privacy}
K.~Chaudhuri and C.~Monteleoni, ``Privacy-preserving logistic regression,'' in
  \emph{Advances in Neural Information Processing Systems}, 2009, pp. 289--296.

\bibitem{cox}
D.~R. Cox, ``{The regression analysis of binary sequences (with discussion)},''
  \emph{J Roy Stat Soc B}, vol.~20, pp. 215--242, 1958.

\bibitem{distributed}
J.~Delange, ``Why using asynchronous communications?''
  \url{http://julien.gunnm.org/programming/linux/2017/04/15/comparison-sync-vs-async},
  2017.

\bibitem{dwork2011acm}
C.~Dwork, ``A firm foundation for private data analysis,'' \emph{Communications
  of the ACM}, vol.~54, no.~1, pp. 86--95, 2011.

\bibitem{dwork2006calibrating}
C.~Dwork, F.~{McSherry}, K.~Nissim, and A.~Smith, ``Calibrating noise to
  sensitivity in private data analysis,'' in \emph{Theory of Cryptography
  Conference (TCC)}, 2006, pp. 265--284.

\bibitem{Registry}
\BIBentryALTinterwordspacing
C.~Dwork and G.~J. Pappas, ``Privacy in information-rich intelligent
  infrastructure,'' \emph{CoRR}, vol. abs/1706.01985, 2017. [Online].
  Available: \url{http://arxiv.org/abs/1706.01985}
\BIBentrySTDinterwordspacing

\bibitem{dwork2014algorithmic}
C.~Dwork and A.~Roth, ``The algorithmic foundations of differential privacy,''
  \emph{Foundations and Trends{\textregistered} in Theoretical Computer
  Science}, vol.~9, no. 3--4, pp. 211--407, 2014.

\bibitem{Rappor}
{\'U}.~Erlingsson, V.~Pihur, and A.~Korolova, ``{RAPPOR}: Randomized
  aggregatable privacy-preserving ordinal response,'' in \emph{Proceedings of
  the 2014 ACM SIGSAC Conference on Computer and Communications Security}, ser.
  CCS '14, 2014, pp. 1054--1067.

\bibitem{Godel}
{European Association for Theoretical Computer Science}, ``{2017 G{\"o}del
  Prize},'' 2017,
  \url{https://eatcs.org/index.php/component/content/article/1-news/2450-2017-godel-prize}.

\bibitem{fanti2016building}
G.~Fanti, V.~Pihur, and {\'U}.~Erlingsson, ``Building a {RAPPOR} with the
  unknown: Privacy-preserving learning of associations and data dictionaries,''
  \emph{Proceedings on Privacy Enhancing Technologies}, vol. 2016, no.~3, pp.
  41--61, 2016.

\bibitem{Birs}
V.~Feldman, I.~Mironov, K.~Talwar, and A.~Thakurta, ``Privacy amplification by
  iteration,'' \emph{ArXiv e-prints 1808.06651}, Aug. 2018.

\bibitem{Geumlek2017}
J.~Geumlek, S.~Song, and K.~Chaudhuri, ``Renyi differential privacy mechanisms
  for posterior sampling,'' in \emph{Advances in Neural Information Processing
  Systems}, 2017, pp. 5295--5304.

\bibitem{hotspotting}
Google, ``{Google Cloud} best practices,''
  \url{https://cloud.google.com/datastore/docs/best-practices}, 2018.

\bibitem{safebrowsing}
------, ``Google safe browsing,'' \url{https://safebrowsing.google.com/}, 2018.

\bibitem{memcache2}
------, ``Memcache overview,''
  \url{https://cloud.google.com/appengine/docs/standard/python/memcache/#operations_per_second_by_item_size},
  2018.

\bibitem{replication}
------, ``Sharding counters,''
  \url{https://cloud.google.com/appengine/articles/sharding_counters}, 2018.

\bibitem{wired2017}
\BIBentryALTinterwordspacing
A.~Greenberg, ``How one of {Apple's} key privacy safeguards falls short,''
  \emph{Wired}, 2017. [Online]. Available:
  \url{https://www.wired.com/story/apple-differential-privacy-shortcomings/}
\BIBentrySTDinterwordspacing

\bibitem{wired}
------, ``Apple's differential privacy is about collecting your data -- but not
  your data,'' in \emph{Wired}, June 13, 2016.

\bibitem{memcache}
Interworx, ``Locating performance-degrading hot keys in memcached,''
  \url{https://www.interworx.com/community/locating-performance-degrading-hot-keys-in-memcached/},
  2018.

\bibitem{kairouz2014extremal}
P.~Kairouz, S.~Oh, and P.~Viswanath, ``Extremal mechanisms for local
  differential privacy,'' in \emph{Advances in Neural Information Processing
  Systems}, 2014, pp. 2879--2887.

\bibitem{kairouz2017composition}
------, ``The composition theorem for differential privacy,'' \emph{IEEE
  Transactions on Information Theory}, vol.~63, no.~6, pp. 4037--4049, 2017.

\bibitem{jl}
K.~Kenthapadi, A.~Korolova, I.~Mironov, and N.~Mishra, ``Privacy via the
  {Johnson-Lindenstrauss} transform,'' \emph{Journal of Privacy and
  Confidentiality}, vol.~5, pp. 39--71, 2013.

\bibitem{lecun1998gradient}
Y.~LeCun, L.~Bottou, Y.~Bengio, and P.~Haffner, ``Gradient-based learning
  applied to document recognition,'' \emph{Proceedings of the IEEE}, vol.~86,
  no.~11, pp. 2278--2324, 1998.

\bibitem{li2012sampling}
N.~Li, W.~Qardaji, and D.~Su, ``On sampling, anonymization, and differential
  privacy or, k-anonymization meets differential privacy,'' in
  \emph{Proceedings of the 7th ACM Symposium on Information, Computer and
  Communications Security}, 2012, pp. 32--33.

\bibitem{pew2015privacy}
M.~Madden and L.~Rainie, ``Americans' attitudes about privacy, security and
  surveillance,'' Pew Research Center.
  \url{http://www.pewinternet.org/2015/05/20/americans-attitudes-about-privacy-security-and-surveillance/},
  2015.

\bibitem{pew2016privacy}
------, ``Privacy and information sharing,'' Pew Research Center.
  \url{http://www.pewinternet.org/2016/01/14/privacy-and-information-sharing/},
  2016.

\bibitem{masters2018revisiting}
D.~Masters and C.~Luschi, ``Revisiting small batch training for deep neural
  networks,'' \emph{arXiv preprint arXiv:1804.07612}, 2018.

\bibitem{McMahan2017}
H.~B. {McMahan}, E.~Moore, D.~Ramage, S.~Hampson, and B.~Ag{\"{u}era y Arcas},
  ``Communication-efficient learning of deep networks from decentralized
  data,'' in \emph{AISTATS}, 2017.

\bibitem{mcmahanlanguage}
\BIBentryALTinterwordspacing
H.~B. {McMahan}, D.~Ramage, K.~Talwar, and L.~Zhang, ``Learning differentially
  private language models without losing accuracy,'' \emph{CoRR}, vol.
  abs/1710.06963, 2017. [Online]. Available:
  \url{http://arxiv.org/abs/1710.06963}
\BIBentrySTDinterwordspacing

\bibitem{mcsherry2017}
\BIBentryALTinterwordspacing
F.~McSherry, ``Deep learning and differential privacy,'' 2017. [Online].
  Available:
  \url{https://github.com/frankmcsherry/blog/blob/master/posts/2017-10-27.md}
\BIBentrySTDinterwordspacing

\bibitem{RenyiDP}
I.~Mironov, ``Renyi differential privacy,'' in \emph{2017 IEEE 30th Computer
  Security Foundations Symposium (CSF)}, 2017, pp. 263--275.

\bibitem{glm}
J.~A. Nelder and R.~W.~M. Wedderburn, ``Generalized linear models,''
  \emph{Journal of the Royal Statistical Society, Series A, General}, vol. 135,
  pp. 370--384, 1972.

\bibitem{nissim2017differential}
K.~Nissim, T.~Steinke, A.~Wood, M.~Altman, A.~Bembenek, M.~Bun, M.~Gaboardi,
  D.~O{\textquoteright}Brien, and S.~Vadhan, ``Differential privacy: A primer
  for a non-technical audience (preliminary version),'' \emph{Vanderbilt
  Journal of Entertainment and Technology Law}, 2018 Forthcoming.

\bibitem{papernot2016}
N.~Papernot, M.~Abadi, {\'U}.~Erlingsson, I.~Goodfellow, and K.~Talwar,
  ``Semi-supervised knowledge transfer for deep learning from private training
  data,'' \emph{5th International Conference on Learning Representations},
  2016.

\bibitem{pate2}
N.~Papernot, S.~Song, I.~Mironov, A.~Raghunathan, K.~Talwar, and
  {\'U}.~Erlingsson, ``Scalable private learning with pate,''
  \emph{International Conference on Learning Representations (ICLR)}, 2018.

\bibitem{eff}
E.~Portnoy, G.~Gebhart, and S.~Grant, in \emph{{EFF DeepLinks Blog}}, Sep 27,
  2016,
  \url{www.eff.org/deeplinks/2016/09/facial-recognition-differential-privacy-and-trade-offs-apples-latest-os-releases}.

\bibitem{neural}
F.~Rosenblatt, ``The perceptron: A probabilistic model for information storage
  and organization in the brain,'' \emph{Psychological Review}, pp. 65--386,
  1958.

\bibitem{shokri2015privacy}
R.~Shokri and V.~Shmatikov, ``Privacy-preserving deep learning,'' in
  \emph{Proceedings of the 22nd ACM SIGSAC Conference on Computer and
  Communications Security}.\hskip 1em plus 0.5em minus 0.4em\relax ACM, 2015,
  pp. 1310--1321.

\bibitem{amp}
A.~Smith, ``Differential privacy and the secrecy of the sample,''
  \url{https://adamdsmith.wordpress.com/2009/09/02/sample-secrecy/}, 2009.

\bibitem{song2013stochastic}
S.~Song, K.~Chaudhuri, and A.~D. Sarwate, ``Stochastic gradient descent with
  differentially private updates,'' in \emph{Global Conference on Signal and
  Information Processing (GlobalSIP)}.\hskip 1em plus 0.5em minus 0.4em\relax
  IEEE, 2013, pp. 245--248.

\bibitem{Tang}
\BIBentryALTinterwordspacing
J.~Tang, A.~Korolova, X.~Bai, X.~Wang, and X.~Wang, ``Privacy loss in {Apple's}
  implementation of differential privacy on {MacOS} 10.12,'' \emph{CoRR}, vol.
  abs/1709.02753, 2017. [Online]. Available:
  \url{http://arxiv.org/abs/1709.02753}
\BIBentrySTDinterwordspacing

\bibitem{wu2017bolt}
X.~Wu, F.~Li, A.~Kumar, K.~Chaudhuri, S.~Jha, and J.~Naughton, ``Bolt-on
  differential privacy for scalable stochastic gradient descent-based
  analytics,'' in \emph{Proceedings of the ACM International Conference on
  Management of Data}, 2017, pp. 1307--1322.

\bibitem{applekeynote2016}
{WWDC 2016}, \emph{{WWDC 2016 Keynote}}, June, 2016,
  \url{https://www.apple.com/apple-events/june-2016/}.

\end{thebibliography}
\bibliographystyle{IEEEtranS}


\end{document}